\documentclass[final]{ws-rmp}
\pdfoutput=1
\usepackage{amsmath}
\usepackage{amsfonts}
\usepackage{amssymb}
\usepackage{mathtools}
\usepackage{graphicx}
\usepackage{slashed}
\usepackage{setspace}
\usepackage{color}
\def\catchline{}{}{}{}{}
\received#1{\par}
\def\revised#1{~\par}
\def\accepted#1{~\par}
\setlength\oddsidemargin{21mm}
\setlength\evensidemargin{21mm}

\def\Dsl{\,\raise.15ex\hbox{/}\mkern-13.5mu D}


\newcommand{\D}{\mathcal{D}}
\renewcommand{\L}{\mathcal{L}}

\newcommand{\scalarp}[2]{(#1\,,#2)}
\newcommand{\scalar}[2]{\langle#1\,,#2\rangle}

\newcommand{\norm}[1]{\|#1\|}
\newcommand{\normm}[1]{|\negthinspace\|#1\|\negthinspace|}
\renewcommand{\d}{\mathrm{d}}
\newcommand{\pM}{\partial M}
\renewcommand{\H}{\mathcal{H}}
\newcommand{\Df}{\mathfrak{D}}
\newcommand{\bt}{\boldsymbol{\theta}}

\newcommand{\scalarb}[2]{\langle#1,#2\rangle}

\renewcommand{\H}{\mathcal{H}}
\newcommand{\nsl}{\vec{\gamma}\cdot\vec{n}\,}
\newcommand{\eref}[1]{(\ref{#1})}
\def\Dsl{\,\raise.15ex\hbox{/}\mkern-12.5mu D}
\def\dsl{\,\raise.15ex\hbox{/}\mkern-12.5mu \partial}
\newcommand\minus{%
  \setbox0=\hbox{-}%
  \vcenter{%
    \hrule width\wd0 height \the\fontdimen8\textfont3%
  }%
}

\numberwithin{equation}{section}

\newenvironment{enumeratesteps}{%
		\begin{enumerate}
		}
		{%
		\end{enumerate} 
		}

\newenvironment{proof2}[1][]{%
\par\addvspace{12pt plus3pt minus3pt}\global\logotrue%
\noindent{\bf Proof#1.\hskip.5em}\ignorespaces}{%
	\par\iflogo\prbox\par
	\addvspace{12pt plus3pt minus3pt}\fi}

\begin{document}

\markboth{M.~ASOREY, A.P.~BALACHANDRAN  AND J.M.~P\'EREZ--PARDO }
{EDGE STATES AT PHASE BOUNDARIES}

%
\catchline{}{}{}{}{}
%

\title{EDGE STATES AT PHASE BOUNDARIES AND THEIR STABILITY}
\author{M.~ASOREY
}

\address{
{\small Departamento de F{\'{\i}}sica Te\'orica, Facultad de Ciencias
\\Universidad de Zaragoza
E-50009  Zaragoza, Spain } \\ 
\email{asorey@unizar.es
} 
}

\author{A.P.~BALACHANDRAN}

\address{Physics Department, Syracuse University\\ Syracuse, New York 13244-1130,U.S.A. \\
	and \\
Departamento de Fisica, Universidad de los Andes, Bogot\'a, Colombia\\
\email{balachandran38@gmail.com }
	}

\author{J.M.~P\'EREZ--PARDO}
\address{
	INFN-Sezione di Napoli\\ Via Cintia Edificio 6, I--80126 Napoli, Italy.\\
\email{juanma@na.infn.it}
	}
	
\maketitle

\begin{history}
\received{(Day Month Year)}
\revised{(Day Month Year)}
\end{history}

\begin{abstract}
{
We analyse the effects of Robin-like boundary conditions on different quantum field theories of spin 0, 1/2 and 1 on manifolds with boundaries. In particular, we show that these conditions often lead to the appearance of edge states. These states play a significant role in physical phenomena like quantum Hall effect and topological insulators. We prove in a rigorous way the existence of spectral lower bounds on the kinetic term of different Hamiltonians, even in the case of abelian gauge fields where it is a non-elliptic differential operator. This guarantees the stability and consistency of massive field theories with masses larger than the lower bound of the kinetic term. Moreover, we find an upper bound for the deepest edge state. In the case of Abelian gauge theories we analyse a generalisation of Robin boundary conditions. For Dirac fermions we analyse the cases of Atiyah-Patodi-Singer and chiral bag boundary conditions. The explicit dependence of the bounds on the boundary conditions and the size of the system is derived under general assumptions.}
\end{abstract}

\keywords{{Robin boundary conditions; Edge States; APS boundary conditions; Dirac fermions; Abelian gauge fields.}}
\ccode{Mathematics Subject Classification 2000: 81T20, 81T55, 81Q10, 35Q40, 35P15, 58J32}
\date{}

\section{Introduction}

Quantum fields are the fundamental pillars of high energy physics.
In the last few years, it has also been shown that many new condensed matter systems have an effective
description in terms of relativistic quantum fields. This novel perspective not only
was useful for developing new experiments, but also to understand
some open questions in field theory and many body physics. In particular, the lack of analytic tools to study
strongly correlated quantum systems and the difficulties of
some numerical approaches can now be overcome by using condensed matter
analogue systems.  Moreover, they can be used as  quantum simulators to understand
some non-perturbative effects which were elusive by standard techniques.

One of the main differences between condensed matter and fundamental particle physics is that in condensed matter, the materials where the effective
field theory lives are of finite size and have boundaries, in contrast with the unbounded nature of Minkowski space-time. This difference is crucial and brings some
extra features to the former like the appearance of a particular type of bound states ({\it edge states})
which are localised at the boundary of the system. These states give rise to new effects and 
transport phenomena. In particular, they are responsible for the appearance
of edge currents and edge conductivity in  graphene  and topological insulators \cite{Ha10}. 
Physical effects where edge states also play a
relevant role include the quantum Hall and Casimir effects.

A specially interesting case is when edge states appear at the boundary between two
phases, e.g. metal-superconductor or metal-insulator. The contact between a gapless 
phase and gapped one induce interesting dynamical effects on the boundary which
are enhanced by the quantum theory.

In this paper we address the characterisation of edge states in field theories defined
on Riemannian manifolds  $M$ with metric $g$ and regular, codimension one, boundaries $\partial M$. The metric gives volume forms $\d\mu_g$ on $M$ and $\d\mu_{\tilde{g}}$ on $\pM$ and their associated Hilbert spaces $\H_M$ and $\H_{\pM}$\,.  In the case of bosonic field theories (scalar or gauge theories), the operator which accounts for the modes associated with the kinetic energy
comes from the Laplace-Beltrami operator $-\Delta_M$. Consistency of the quantum field theories
requires that this operator is self-adjoint on $\H_M$ and positive, otherwise the
quantum vacuum is unstable and unitarity of the quantum theory is lost. See \cite{gadella,amc,TRGNair} and references therein in this connection.
In the Euclidean space $M=\mathbb{R}^N$\,, these properties are guaranteed. The operator
$-\Delta_M$ is essentially self-adjoint and positive on smooth functions of compact support
in $\mathbb{R}^N$.

However, on manifolds with boundaries,   it does not always have the required properties. 
In particular, for scalar fields in the domain $\D_M$ of the Laplacian,
\begin{align}
	\scalar{\Phi_1}{-\Delta_M\Phi_2}&=\int_M\d\mu_g\,\overline{\Phi}_1(-\Delta_M\Phi_2)\notag\\
		&=\int_M\d\mu_g\, \overline{\nabla\Phi}_1\cdot\nabla\Phi_2-\int_{\pM}\d\mu_{\tilde{g}} \, \overline{\Phi}_1\bigr|_{\pM}\nabla_{\mathbf{n}}\Phi_2\bigr|_{\pM}\;,\label{eq:greenintro}
\end{align}
where, as mentioned above, $\pM$ is the induced manifold at the boundary, $\d\mu_{g}$ and $\d\mu_{\tilde{g}}$ denote the volume forms on $M$ and $\partial M$ respectively and $\mathbf{n}$ is the outward-drawn normal on $\partial M$. Notice that the second term, unlike the first, can be negative for $\Phi_1=\Phi_2$.

In the case that the boundary is an interface boundary between two phases, we just consider it as a double boundary: one face for each
side with opposite normal derivatives, and again asymmetric boundary conditions on the two faces can give rise to negative contributions.

It has been known for a long time \cite{Ba93, Li63} that such a negative spectrum does occur for the Robin boundary conditions
\begin{equation}
	\nabla_{\mathbf{n}}\Phi={\mu}\Phi\;,
	\label{rbc}
\end{equation}
for positive values ($\mu>0$) of the Robin parameter and also for more general boundary conditions \cite{Grubb74}. In particular, Asorey et al. \cite{As05} proved that, when $\mu$ becomes large and the Dirichlet condition is approached, the negative spectrum recedes to  $-\infty$ and at the same time the corresponding eigenstates get progressively more localised at the boundary. The existence of this phenomenon is shown in \cite{amc,Ib13} for even more general boundary conditions. Numerical procedures for analysing the associated eigenvalue problem are also described therein.\\

The physical interest for these boundary conditions relies in the fact that they arise in a natural way at the interface between materials in different phases \cite{ABPP}. The best known examples come from electromagnetism between the normal and superconductor phases. The non-interacting order parameter field $\Phi$, that in general can be a tensor field, is determined by the spectrum and eigenvectors of $-\Delta + m^2$ instead of those of $-\Delta$\,. The shift in the spectrum provided by the term $m^2$ is such as to lift the negative energy levels to non-negative values. At the same time, the bulk levels, which already have positive eigenvalues for the case $m=0$, acquire a gap of order $m^2$. Topological insulators also share these features.

From a mathematical viewpoint the situation above requires the relevant operator $-\Delta$ to be lower bounded. This is to ensure that the shift by a mass term $m^2$ is able to make the operator $-\Delta + m^2$ positive. Lower boundedness of self-adjoint extensions of elliptic operators on compact manifolds was proved by G.~Grubb \cite{Grubb74}. The case of non-compact domains with compact boundaries  was proved very recently by the same author \cite{Grubb12}. Lower boundedness of general self-adjoint extensions of non-elliptic operators like $\d^*\d$ on one-forms, which describes the electromagnetic field, remains an open problem.

This article is focused on the study of the general structure of edge sates for field theories with scalar, vector and spinor fields. Edge states are associated with the negative energy eigenstates of the relevant operators, cf.\ \cite{amc,Ib13,ABPP}.

First we present an alternative proof of the well-known semiboundedness of the Laplace-Beltrami operator in a compact Riemannian manifold with boundary for Robin boundary conditions, cf.\ \cite{Grubb74, Dan09, Dan13}. This is done in Section \ref{sectionRBC}. The proof is based on a recent approach \cite{ibort13} that can be used for more general boundary conditions than those considered here and that works also in the context of non-compact manifolds with compact boundaries. As pointed out there, generalisation to non-compact boundaries is also possible. The advantage of this proof is that it keeps track of the different geometrical structures that naturally appear in the problem and can be used to provide also upper bounds for the ground states. This is done in Section \ref{sectionedge}.  The approach provides estimations of the shape of the eigenfunctions while the upper bounds coincide with the asymptotic results obtained in \cite{kp15}. When they exist, the large negative energy states are located at the boundary earning the name of edge states. 

Furthermore, the proof can be modified to treat the situation of the non-elliptic operator $\d^*\d$ on one-forms mentioned above. The literature has no result on this problem despite its physical relevance for the electromagnetic field. This is done in Section \ref{sectionEM}. In order to prove lower boundedness of this non-elliptic operator, one has to take into account the infinite dimensional kernel of the operator. This is done by means of a convenient use of gauge invariance. The bounds do not depend on the gauge chosen and are thus completely general. The situation for fermionic fields is considered in Section \ref{sectionAPS} and Section \ref{sectionchiral}. We analyse the edge effects arising from different choices of the boundary conditions. In this case the relevant operator is the Dirac operator whose spectrum is unbounded below and above. Edge states are associated in this case with the appearance of eigenstates with eigenvalue $\lambda$ within the mass gap, i.e. $-m < \lambda < m$\,. Again, a convenient modification of the proof in Section \ref{sectionRBC} is used to show the existence of such states for appropriate boundary conditions. In Section \ref{sectionAPS} we consider  Atiyah-Patodi-Singer boundary conditions. Surprisingly, in contrast with what happens for scalar fields and vector fields, there is a threshold size of the manifold below which the edge states disappear. This leads to qualitatively different behaviour in this situation. In Section \ref{sectionchiral} chiral bag boundary conditions are considered showing similar results.


\section{Edge States in Scalar Theories}
\label{sectionRBC}

Let us consider for simplicity a complex massive scalar free field $\phi$ with mass $m$ on a manifold $M$ with a compact regular boundary $\partial M$. 

The quantum dynamics of the field is affected by the presence of the boundary. The main effect is due to
the contribution to the kinetic term of the quantum Hamiltonian governed by the quadratic differential 
operator $-\Delta+m^2$.
Boundary effects appear through the boundary conditions that test fields have to satisfy to preserve unitarity of
the quantum theory \cite{amc}. The explicit requirement is that  $-\Delta+m^2$ has to be a
self-adjoint, positive operator. This consistency condition imposes severe constraints on the type of boundary conditions
that stable physical  systems must satisfy.

We shall consider the family of general Robin boundary conditions  
\begin{equation}\label{bcp}
	\nabla_{\mathbf{n}}\Phi(p)=\mu(p)\Phi(p)\;,\quad p\in\partial M\;,
\end{equation}
where ${\mu}$ is any smooth and hence bounded function which is not necessarily positive. This includes the  superconducting case, where $\mu={m}>0$, as a special case (see \cite{ABPP}). Although these are not the most general local boundary conditions that one can consider, see \cite{ilpp14}, they are general enough to present the phenomena of edge states associated to the presence of boundaries.

Among the eigenstates  $\Phi_n$ of $-\Delta$ satisfying this boundary condition for positive $\mu(p)$\,, there are states with negative eigenvalues, i.e., $ -\Delta\Phi_n=\lambda_n\Phi_n \,$ with $\lambda_n<0$\,.
As $\mu(p)$ approaches infinity,   the boundary condition approaches Dirichlet boundary condition $\Phi|_{\partial M}=0$, the negative eigenvalues approach minus infinity and the corresponding eigenvectors get localised near $\partial M$ (see \cite{As05} for details).

However, the theory may still be consistent and the vacuum stable because of the presence of the mass term.
Even if  $-\Delta$  has a negative spectrum, the whole physical operator $-\Delta+m^2$ may be positive.
This requires the spectrum of $-\Delta$ induced by the boundary condition function $\mu(p)$ to be bounded from below by $-m^2$.
 
Although, mathematically speaking, ${\mu(p)}$ might be any non--trivial function on the boundary $\partial M$, when modelling the behaviour of a superconducting phase, the fact that the penetration length in the superconductor is small across the whole boundary ({\it skin effect})  requires that ${\mu(p)}>0$ for any $p\in\partial M$. Notice that in any case the boundary condition  \eqref{rbc} is invariant under time--reversal as long as ${m}$ and $\mu(p)$ are real. Hence, like  topological insulators, the system is $T$--invariant, with edge excitations and an incompressible bulk \cite{ABPP}.


\subsection{ Lower bound theorem for  Robin boundary conditions}\label{subsec:lowerbound}

From now on and throughout the rest of the article there will appear several different Hilbert spaces associated to differentiable manifolds. We will denote the space of square integrable functions over a Riemannian manifold $\Xi$ as $\L^2(\Xi)$. The associated scalar product and norm will be given respectively by $\scalar{\cdot}{\cdot}_{\Xi}$ and $\norm{\cdot}_\Xi$\;. Incidentally, the subindices may be dropped when referring to the manifold $M$\,. In addition, we will need to use the Sobolev spaces of order 1 and 2 that we shall denote as $\H^1(\Xi)$ and $\H^2(\Xi)$ respectively. These spaces are the natural spaces to define differential operators of first and second order. Their corresponding norms will be denoted as $\norm{\cdot}_{\H^1(\Xi)}$ and $\norm{\cdot}_{\H^2(\Xi)}$ respectively. We refer to \cite{adams03} for further details on these spaces.

We shall show now that there exists a lower bound, $-\mu^2_0$, for the spectrum of
the Laplace-Beltrami operator with Robin boundary condition defined on a Riemannian manifold $M$ with Riemannian metric $g$. The result is given by the following theorem, which
implies that  the quantum field theory is consistent  whenever $m^2\geq\mu_0^2$\,. 

\begin{theorem}\label{thm:scalarthm} Let $M$ be an oriented Riemannian manifold with regular oriented boundary $\partial M$. 
Let $\mathbf n$ be the outgoing normal vector and $\mu$ a smooth function, both defined at the boundary $\pM$\,. 
The Laplace-Beltrami operator $-\Delta$ restricted to smooth functions satisfying  the Robin boundary conditions 
\begin{equation}\label{bcp2}
	\nabla_{\mathbf{n}}\Phi(p)=\mu(p)\Phi(p)\;,\quad p\in\partial M\;,
\end{equation}
is essentially self-adjoint and bounded from below, i.e. 
	\begin{equation}\label{semibound}
		\scalar{\Phi}{-\Delta\Phi}\geq-\mu_0^2\norm{\Phi}^2 ,
	\end{equation}
with $\mu_0$ a   finite positive constant.
\end{theorem}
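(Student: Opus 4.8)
The plan is to start from the Green formula \eqref{eq:greenintro}, specialise it to $\Phi_1=\Phi_2=\Phi$, and insert the Robin condition \eqref{bcp2}, which replaces the normal derivative by multiplication by $\mu$ on $\pM$. This yields at once the basic identity
\begin{equation*}
	\scalar{\Phi}{-\Delta\Phi}=\norm{\nabla\Phi}^2-\int_{\pM}\d\mu_{\tilde g}\,\mu\,|\Phi|^2\;,
\end{equation*}
in which the bulk term is manifestly non-negative and all the difficulty is concentrated in the boundary integral. The entire problem is therefore to dominate that boundary integral by the bulk quantities $\norm{\nabla\Phi}^2$ and $\norm{\Phi}^2$, absorbing a small multiple of the gradient norm and paying for the remainder with a constant times $\norm{\Phi}^2$.

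To do this geometrically — this is the approach of \cite{ibort13} that keeps track of the collar structure near $\pM$ — I would trade the boundary integral for a bulk integral by the divergence theorem. Since $\pM$ is compact and regular it admits a collar neighbourhood in which the outward normal $\mathbf n$ extends smoothly, so I would choose a smooth, compactly supported vector field $F$ on $M$ with $F\cdot\mathbf n=\mu$ on $\pM$ (concretely $F=-\chi(r)\,\tilde\mu\,\partial_r$ in the collar, where $r$ is the distance to $\pM$, $\tilde\mu$ a smooth extension of $\mu$, and $\chi$ a cutoff equal to $1$ at $r=0$). Applying the divergence theorem to the field $|\Phi|^2F$ then gives
\begin{equation*}
	\int_{\pM}\d\mu_{\tilde g}\,\mu\,|\Phi|^2=\int_M\d\mu_g\,\Bigl(|\Phi|^2\,\mathrm{div}\,F+2\,\mathrm{Re}\,\overline\Phi\,(F\cdot\nabla\Phi)\Bigr)\;.
\end{equation*}
Because $F$ is smooth and compactly supported, $\norm{F}_\infty$ and $\norm{\mathrm{div}\,F}_\infty$ are finite; the first term is bounded by $\norm{\mathrm{div}\,F}_\infty\norm{\Phi}^2$, and the second, by Cauchy--Schwarz followed by Young's inequality $2ab\le\varepsilon a^2+\varepsilon^{-1}b^2$, by $\varepsilon\norm{\nabla\Phi}^2+\varepsilon^{-1}\norm{F}_\infty^2\norm{\Phi}^2$.

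Choosing $\varepsilon$ strictly smaller than $1$ (say $\varepsilon=\tfrac12$) then produces
\begin{equation*}
	\scalar{\Phi}{-\Delta\Phi}\ge(1-\varepsilon)\norm{\nabla\Phi}^2-\Bigl(\varepsilon^{-1}\norm{F}_\infty^2+\norm{\mathrm{div}\,F}_\infty\Bigr)\norm{\Phi}^2\ge-\mu_0^2\norm{\Phi}^2\;,
\end{equation*}
with $\mu_0^2:=\varepsilon^{-1}\norm{F}_\infty^2+\norm{\mathrm{div}\,F}_\infty$, which is precisely the claimed bound \eqref{semibound} and makes transparent how $\mu_0$ depends on $\mu$ and on the geometry of the collar. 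For the self-adjointness part I would pass to the quadratic form $Q(\Phi)=\norm{\nabla\Phi}^2-\int_{\pM}\mu\,|\Phi|^2$: the same estimate shows the boundary term is form-small relative to $\norm{\nabla\Phi}^2$, so $Q$ is semibounded and closable on $\H^1(M)$, and its closure represents a self-adjoint operator (the Robin Laplacian) by the representation theorem. One then checks, using interior elliptic regularity together with the boundary-triple description of \eqref{bcp2} as a Lagrangian (self-adjoint) boundary condition, that the smooth functions obeying \eqref{bcp2} form a core for this operator, i.e.\ that $-\Delta$ restricted to them is essentially self-adjoint.

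The main obstacle I anticipate is not the lower bound — the divergence-theorem manoeuvre makes that almost mechanical once the collar field $F$ is in place — but the essential self-adjointness, where one must verify that the smooth functions satisfying \eqref{bcp2} are genuinely a core and that no further self-adjoint extensions intrude. The compactness of $\pM$ (as opposed to $M$) is exactly what keeps $F$ compactly supported and hence $\mu_0$ finite, so it is the hypothesis on which I would lean most heavily throughout.
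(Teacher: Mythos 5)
Your argument is correct, and it reaches the bound by a genuinely different route than the paper. Where you convert the boundary integral into a bulk integral via the divergence theorem applied to $|\Phi|^2F$ (with $F$ a compactly supported extension of $\mu\mathbf{n}$ into a collar) and then absorb it by Cauchy--Schwarz and Young, the paper instead splits $M$ into the collar $\Xi$ and its complement, imposes an auxiliary Neumann condition on the interface $\partial\Xi_{-}$ (Neumann bracketing at the level of quadratic forms, justified through \cite[Theorem 7.2.1]{Da95} and the Lions trace inequality), and reduces the collar form to the explicitly solvable one-dimensional problem $-\phi''=E\phi$ with Neumann at one end and Robin at the other, whose unique negative eigenvalue solves $\kappa\tanh(\epsilon\kappa)=c$. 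What your route buys: it is shorter and more elementary, and since your $\varepsilon$ can be taken arbitrarily small, the boundary form is \emph{infinitesimally} form-bounded relative to the Dirichlet form, so closability, semiboundedness and the existence of the self-adjoint Robin Laplacian all follow in one stroke from the KLMN/representation theorem; your constant $\mu_0^2=\varepsilon^{-1}\norm{F}_\infty^2+\norm{\mathrm{div}\,F}_\infty$ is of the same order as the paper's in both regimes ($\sim\bar{\mu}^2$ when $\bar{\mu}\epsilon\gg1$, $\sim\bar{\mu}/\epsilon$ when $\bar{\mu}\epsilon\ll1$, since $\mathrm{div}\,F$ carries the $1/\epsilon$ and mean-curvature contributions), though not sharp. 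What the paper's route buys: the explicit spectral function $\kappa\tanh(\epsilon\kappa)=c$ yields the quantitative bound $\mu_0=\kappa_m(\bar{\mu}/(1-\delta))$ together with the profile of the radial eigenfunction, and this same collar machinery is then recycled both for the matching \emph{upper} bound proving existence of edge states (Section \ref{sectionedge}) and for the non-elliptic operator $\d^\ast\d$ on one-forms (Section \ref{sectionEM}), where trace-based or compact-resolvent arguments are unavailable and your vector-field trick would not transparently handle the infinite-dimensional gauge kernel and the nonlocal term $\d(-\Delta_0)^{-1}\d^\ast A$ in the boundary condition. On essential self-adjointness you are at the same level of rigour as the paper, which simply cites \cite{grubb68,As05,ibort13} rather than verifying the core property; your honest flagging of that step as the remaining work is appropriate, and it can be discharged by those references.
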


Although the existence of the lower bound \eqref{semibound} is known \cite{Grubb74}, for completeness we include the proof here. Our  alternative proof is based on cobordism methods which will be very useful to obtain upper bounds in the next section and can be  extended to non-elliptic operators in the case of gauge theories in Section 3.

In the proof we shall make an intensive use of the quadratic forms associated to the operators, cf.\ \cite{Re72, Da95}. While there is a one-to-one correspondence between semibounded self-adjoint extensions and quadratic forms, the quadratic forms have the advantage that while maintaining the semibounds they can be defined on domains larger than those of the operators themselves. More importantly, in some cases even the boundary conditions are suppressed by passing to the quadratic form. This happens, for instance, in the well-known case of Neumann boundary conditions where the domain of the associated quadratic form is the full Sobolev space of order 1, without restrictions, cf.\ \cite[Theorem 7.2.1]{Da95}.

Let us finally remark that for negative Robin parameters the associated self-adjoint extensions of the Laplace-Beltrami operator are trivially positive. Indeed, the right hand side of \eqref{eq:greenintro} is positive for $\Phi_1=\Phi_2$\,. On the contrary, for positive Robin parameters there is no straightforward way to prove semiboundedness. The reason is that the $\L^2(\pM)$-norm of the trace, i.e. of the restriction to the boundary of a given function at the bulk, cannot be bounded by the $\L^2(M)$-norm of the function. This is one of the conclusions of the well known Lions Trace Theorem, cf.\ \cite{lions72}. The proof of the necessary and sufficient conditions for semiboundness of even order elliptic operators appearing in \cite{Grubb74, Grubb12} relies essentially on the fact that the inverse of the operator with Dirichlet\footnote{Strictly speaking it is the so called hard extension of the operator.} boundary conditions is a compact operator. This clearly does not hold for the non-elliptic operator mentioned above and thus this proof cannot be used in this case.\\

Let us consider Riemann normal coordinates in a collar neighbourhood $\Xi$ of the boundary $\pM$\,. Let $r$ be the radial normal coordinate increasing from $r=R_0-\epsilon$ at the inner boundary of the collar
to  $r=R_0$  at the physical boundary  $\partial M$ of the system. In these coordinates,
\begin{equation}\label{split M S2}
	\Xi=[R_0-\epsilon,R_0]\times \partial M,
\end{equation}
while the Riemannian metric $g$ of $M$ restricted to $\Xi$ can be written as
\begin{equation}\label{split metric S2}
	g_{_{\Xi}}=
		\begin{pmatrix}
			1 & 0 \\ 0 & \tilde{g}(r,\bt ) \\ 
		\end{pmatrix}\;,
\end{equation}
in terms of boundary coordinates $\bt:=\{\theta^i\}$ of $\partial M$.

Due to the splitting of $M$ into $\Xi$ and its complement  $M\backslash\Xi$  in $M$
(see Fig.~\ref{fig:split M}), a new boundary appears. We denote it by $\partial\Xi_{-}$\,. Notice that the boundary of the collar is therefore $\partial\Xi=\pM\cup\partial\Xi_-$. This boundary has to be considered whenever one defines differential operators on $\Xi$ and $M\backslash\Xi$\,. We consider different Hilbert spaces on each manifold associated to this splitting and denote them as follows. The spaces of square integrable functions  are going to be denoted as $\L^2(M)$, $\L^2(\Xi)$ and $\L^2(M\backslash\Xi)$. The Sobolev spaces of order $1$ will be: $\H^1(M)$, $\H^1(\Xi)$ and $\H^1(M\backslash\Xi)$\,. These are going to be the natural spaces to define the quadratic forms. Similarly, the Sobolev spaces of order $2$ will be: $\H^2(M)$, $\H^2(\Xi)$ and $\H^2(M\backslash\Xi)$\,. The scalar products and norms associated to these spaces are determined by the restrictions (pull-backs) of the Riemannian metric $g$ and distinguished by subscripts. We will also need to consider three different Laplace-Beltrami operators $-\Delta_M$, $-\Delta_{\Xi}$ and $-\Delta_{M\backslash\Xi}$ each one defined on a different domain $\D_M$, $\D_{\Xi}$ and $\D_{M\backslash\Xi}$\,.

\begin{figure}
\hspace{3cm}\includegraphics[scale=0.2]{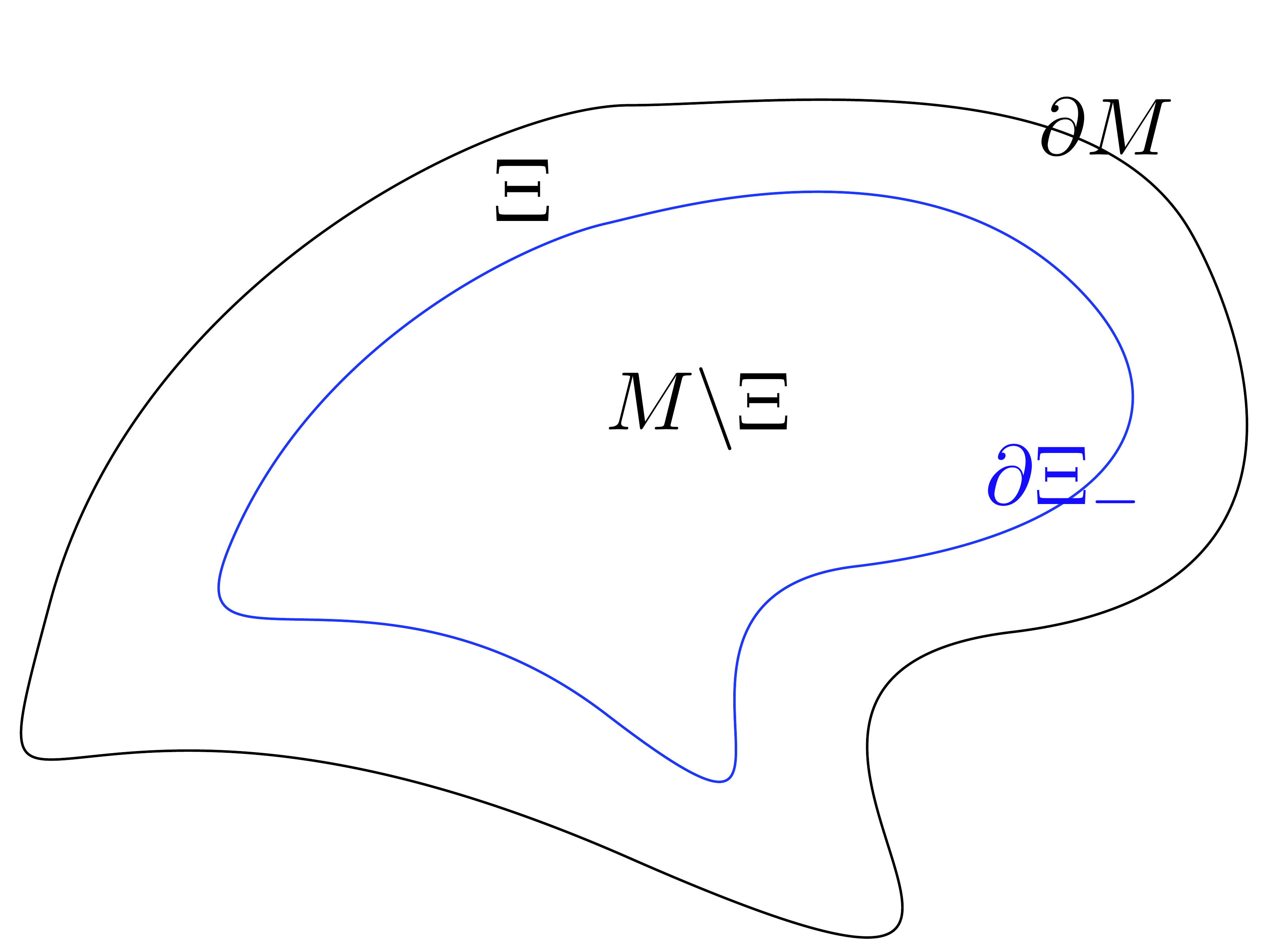}
\caption{Decomposition of the manifold $M$ into the collar neighbourhood $\Xi$ and its complement. $\partial\Xi_{-}$ defines a new boundary.}
\label{fig:split M}
\end{figure}

\begin{proof2}[ of Theorem \ref{thm:scalarthm}]
That the boundary conditions \eqref{bcp2} lead to an essentially self-adjoint operator is a well-known result, see for instance \cite{grubb68,As05,ibort13}. The only restrictions concern the regularity of the function $\mu(p)$\,, but we consider that it is smooth.

The proof of the semiboundedness can be derived in 5 steps.

\begin{enumeratesteps}
\item \label{step:1}{\bf The collar Laplacian.}
Let $-\Delta_\Xi$ be the Laplace-Beltrami operator on the collar $\Xi$ with metric defined by the restriction of $g$\,. Consider as its domain the space of functions that satisfy the Robin boundary condition at $\pM$ and the Neumann boundary condition at the auxiliary boundary $\partial\Xi_{-}$\,, i.e.
\begin{equation}\label{domainXi}
	\D_\Xi=\{\Phi\in\H^2(\Xi)\mid \nabla_{\mathbf{n}}\Phi(p)=\mu(p)\Phi(p)\,,p\in\pM\,; \nabla_{\mathbf{n}}\Phi(p)=0\,,p\in \partial\Xi_{-} \}\;.
\end{equation}
The operator defined this way is essentially self-adjoint.

\item {\bf Bound on the collar Laplacian.}\label{step:2}

The quadratic form associated to the collar Laplacian of the previous step and therefore $-\Delta_{\Xi}$ are bounded from below, i.e.,
\begin{equation}\label{semiboundXi2}
	\scalar{\Phi}{-\Delta_\Xi\Phi}_\Xi\geq-\mu_0^2\norm{\Phi}_{\Xi}^2\,,\quad\Phi\in\overline{\D_\Xi}^{\norm{\cdot}_{\H^1(\Xi)}}\;.
\end{equation}
The constant $\mu_0$ is of the order of $\sup_{p\in\pM}|\mu(p)|$\,. The bar denotes the closure with respect to the corresponding norm. We postpone the proof of this step until the end of this section.

\item {\bf The inner Laplacian.}\label{step:3}
 
Consider the Laplacian $-\Delta_{M\backslash\Xi}$ with Neumann boundary condition on $\partial\Xi_{-}$\,, i.e. defined on the domain
\begin{equation}\label{eq:DMXi}
	\D_{M\backslash\Xi}=\{\Phi\in\H^2({M\backslash\Xi})\mid  \nabla_{\mathbf{n}}\Phi(p)=0\,,p\in \partial\Xi_{-} \}\;.
\end{equation}
This operator is also essentially self-adjoint. It is clearly non-negative since Green's identity gives
$$\scalar{\Phi}{-\Delta_{M\backslash\Xi}\Phi}_{M\backslash\Xi}=\scalar{\nabla\Phi}{\nabla\Phi}_{M\backslash\Xi}\geq 0$$
Moreover, \cite[Theorem 7.2.1]{Da95} ensures that the quadratic form associated to this operator has domain $\H^1(M\backslash\Xi)$ without any constraints.

\item {\bf Equivalence of the bounds of the operator and of the associated quadratic form.}\label{step:4}

We want to find a lower bound for the operator $-\Delta_M$ defined on 
$$\D_M=\{\Phi\in\H^2(M) \mid \nabla_{\mathbf{n}}\Phi(p)=\mu(p) \Phi(p) \,,p\in \pM \}\;,$$
that is, to find a non-negative constant $\mu_0$ such that
$$\scalar{\Phi}{-\Delta_M\Phi}\geq-\mu_0^2\norm{\Phi}^2\,,\quad\Phi\in\D_M\;.$$
The relation between self-adjoint operators and the associated quadratic forms establishes that this is equivalent to showing that
$$\scalar{\Phi}{-\Delta_M\Phi}\geq-\mu_0^2\norm{\Phi}^2\,,\quad\Phi\in\overline{\D_M}^{\normm{\cdot}_{{\Delta_M}}}\;,$$
where $\normm{\cdot}_{\Delta_M}$ stands for the graph-norm of the quadratic form defined by
$$\normm{\Phi}^2_{\Delta_M} := \scalar{\Phi}{\Delta_M\Phi} + m\norm{\Phi}^2\;,$$
where $m$ is a lower bound of the quadratic form.

A sufficient condition for this inequality to hold is 
$$\scalar{\Phi}{-\Delta_M\Phi}\geq-\mu_0^2\norm{\Phi}^2\,,\quad\Phi\in\overline{\D_M}^{\norm{\cdot}_{\H^1(M)}}\;.$$
This is so because the graph-norm of the quadratic form is continuous with respect to the Sobolev norm of order one, i.e.
$$\normm{\Phi}_{\Delta_M}\leq K\norm{\Phi}_{\H^1(M)}\;.$$
To prove the latter inequality it is enough to perform integration by parts once and use the boundary condition to get
\begin{align*}
\scalar{\Phi}{-\Delta_M\Phi}&=\scalar{\nabla\Phi}{\nabla\Phi}-\scalar{\Phi|_{\pM}}{\mu(p)\Phi|_{\pM}}_{\pM}\\
	&\leq \norm{\nabla\Phi}^2+\sup|\mu(p)|\norm{\Phi|_{\pM}}_{\pM}^2\;.
\end{align*}
The first term is clearly bounded by the norm of the Sobolev space of order 1. The second one is bounded by means of the Lions trace inequality, cf. \cite{adams03,lions72}, 
$$\norm{\Phi|_{\pM}}_{\pM}\leq K \norm{\Phi}_{\H^1(M)}\;.$$

\item {\bf Sum of quadratic forms.}\label{step:5}

First notice that because of \cite[Theorem 7.2.1]{Da95}, the Neumann boundary condition at the auxiliary boundary $\partial\Xi_{-}$ disappears when taking the closures with respect to the Sobolev norm of order 1 and therefore we have that
$$\overline{\D_M}^{\norm{\cdot}_{\H^1(M)}}\subset \overline{\D_\Xi}^{\norm{\cdot}_{\H^1(\Xi)}}\oplus\H^1(M\backslash \Xi)\;. $$
The set on the left hand side is strictly contained in the set of the right hand side because functions on the right hand side are allowed to be discontinuous along the auxiliary boundary $\partial\Xi_{-}$ while this is not the case on the left hand side. Now for any $\Phi\in\overline{\D_M}^{\norm{\cdot}_{\H^1(M)}}$\,, we have that
\begin{subequations}\label{eqs:sumqf}
\begin{align}
	\scalar{\Phi}{-\Delta_M\Phi}&=\scalar{\Phi}{-\Delta_{\Xi}\Phi}_{\Xi}+\scalar{\Phi}{-\Delta_{M\backslash\Xi}\Phi}_{M\backslash\Xi}\label{eq:sumqf}\\
		&\geq -\mu_0^2 \norm{\Phi}^2_{\Xi} \geq -\mu_0^2 \norm{\Phi}^2_{M}\;,  \label{eq:M>Xi}
\end{align}
\end{subequations}
where the quadratic forms in \eqref{eq:sumqf} are those defined in Steps~\ref{step:1} and \ref{step:3}. The inequalities of \eqref{eq:M>Xi} follow by Step \ref{step:2} and the fact that 
$$\norm{\Phi}_\Xi^2\leq \norm{\Phi}_\Xi^2 + \norm{\Phi}_{M\backslash\Xi}^2=\norm{\Phi}_M^2\;.$$
\end{enumeratesteps}

\end{proof2}

\begin{proof2}[ of Step \ref{step:2}]

Given a  fixed parameter $0<\delta<1$, the parameter $\epsilon$ defining the width of the collar neighbourhood $\Xi$ can be chosen such that the bounds  
	\begin{equation}\label{compact omega}
		(1-\delta)\sqrt{|\tilde{g}(R_0,\bt)|}\leq\sqrt{|\tilde{g}(r,\bt)|}\leq (1+\delta)\sqrt{|\tilde{g}(R_0,\bt)|}
	\end{equation}
hold.

Now let $I:=[R_0-\epsilon,R_0]$\,. According to the boundary conditions of Eq.~\eqref{domainXi}, we have
	\begin{align}
		\scalar{\Phi}{-\Delta_{\Xi}\Phi}_{\Xi}	&=\int_{\partial M}\int_{I} g_{_{\Xi}}^{-1}(\d\Phi,\d\Phi)\d\mu_{g_{_{\Xi}}}-\scalar{\varphi}{\mu\varphi}_{\partial M}\notag\\
								&\!\!\!\!\!\!\! =\int_{\partial M}\int_{I} \frac{\partial\bar{\Phi}}{\partial r}\frac{\partial\Phi}{\partial r}\d\mu_{g_{_{\Xi}}}+\int_{\partial M}\int_{I}\tilde{g}^{-1}(\d_{\bt}\Phi,\d_{\bt}\Phi)\d\mu_{g_{_{\Xi}}}-\scalar{\varphi}{\mu\varphi}_{\partial M}\notag
								\\
								&\!\!\!\!\!\!\! \geq \int_{\partial M}\int_{I} \frac{\partial\bar{\Phi}}{\partial r}\frac{\partial\Phi}{\partial r} \d\mu_{g_{_{\Xi}}}
								-\scalar{\varphi}{\mu\varphi}_{\partial M}\notag 
								\\
								&\!\!\!\!\!\!\! \geq (1-\delta)\int_{\partial M}\left[\Bigl[\int_{I} \frac{\partial\bar{\Phi}}{\partial r}\frac{\partial\Phi}{\partial r} \d r\Bigr] -\frac{\bar{\mu}}{1-\delta}|\varphi|^2 \right]\d\mu_{\tilde{g}}\;,\label{nada} 
	\end{align}
where $\varphi=\Phi|_{\pM}$ denotes the boundary value of
$\Phi$, while $g_{_{\Xi}}:=g|_{\Xi}$, $\bar{\mu}=\sup \mu(p)$ and $\d_{\bt}\Phi$ denotes the components of $\d\Phi$ that are tangent to the boundary. Notice that there is only a  boundary term  contribution on $\partial M$ because the Neumann boundary condition cancels the boundary contribution of the inner boundary $\partial\Xi_{-}$.
The first inequality holds because the second term of the second line is positive definite. 

The last inequality above ensures that a lower bound of $\scalar{\Phi}{-\Delta_{\Xi}\Phi}_{\Xi}$ can be given by a lower bound of the quadratic form on the right hand side. It is easy to show that this latter quadratic form is the quadratic form associated to the operator 
\begin{equation}\label{eq:radial operator}
	{-\Delta_r}:=-\frac{\d^2}{\d r^2}\otimes\mathbb{I}
\end{equation}
defined on $\L^2(I)\otimes\L^2(\pM)$ by the collar radial projection of the Laplacian. The  operator $-\frac{\d^2}{\d r^2}$ is  densely defined  on $\L^2(I)$  and is essentially self-adjoint with mixed boundary conditions: Neumann boundary condition at $\{R_0-\epsilon\}$ and Robin boundary condition $$\phi'(R_0)=\frac{\bar{\mu}}{1-\delta}\phi(R_0)\;,$$ at $R_0$\,. Moreover the lower bound of $-\Delta_r$ coincides with that of  $-\frac{\d^2}{\d r^2}$. For a more detailed proof we refer to \cite{ibort13}, where more general boundary conditions are considered.

In order to obtain the bound \eqref{semiboundXi2}, it is therefore enough to show that the one-dimensional problem of the Laplacian with Robin boundary condition at one boundary and Neumann boundary condition at the other is bounded from below.

Consider the following one-dimensional spectral boundary value problem for functions $\phi$ 
in the interval $[0,\epsilon]$:
	\begin{equation}\label{laplacian interval2}
		-\phi''=E\phi\;,\quad
		\phi'(0)=0
		\quad
		\phi'(\epsilon)=c\, \phi(\epsilon)\,,\quad c>0\;.
	\end{equation}
We can obtain its spectrum by finding the zeros of the spectral function that is obtained from the plane wave ansatz $\phi(x)=A_+ e^{ i k x}+
A_- e^{ -i k x}$.
The spectral function for any type of boundary condition in $[0,\epsilon]$ has been analysed in \cite{gadella, amc}. In the present case,
it reduces to
	\begin{equation}\label{eq:spectraleq}
		k \tan(\epsilon k)=-c\;.
	\end{equation}
Therefore there is an infinite number of positive eigenvalues $E_n=k^2_n$. 
However, there is a special bound state with negative energy $E_0=-\kappa^2$ which is obtained from
the spectral equation \eqref{eq:spectraleq} assuming that $k=i \kappa$ is imaginary  and $\kappa$ is given by the solution of the equation
	\begin{equation}\label{nrbc}
		\kappa \tanh(\epsilon \kappa)=c\;.
	\end{equation}

This equation has only one solution for positive $c$\,. Thus we have shown that \eqref{eq:radial operator} is  semibounded. Therefore the quadratic form on the right hand side of \eqref{nada} is semibounded  with the same constant. This shows \eqref{semiboundXi2} and therefore Theorem~\ref{thm:scalarthm}.
\vspace{2pt}
\end{proof2}

 \begin{figure}[h]
 \centerline{  \includegraphics[height=6.5cm]{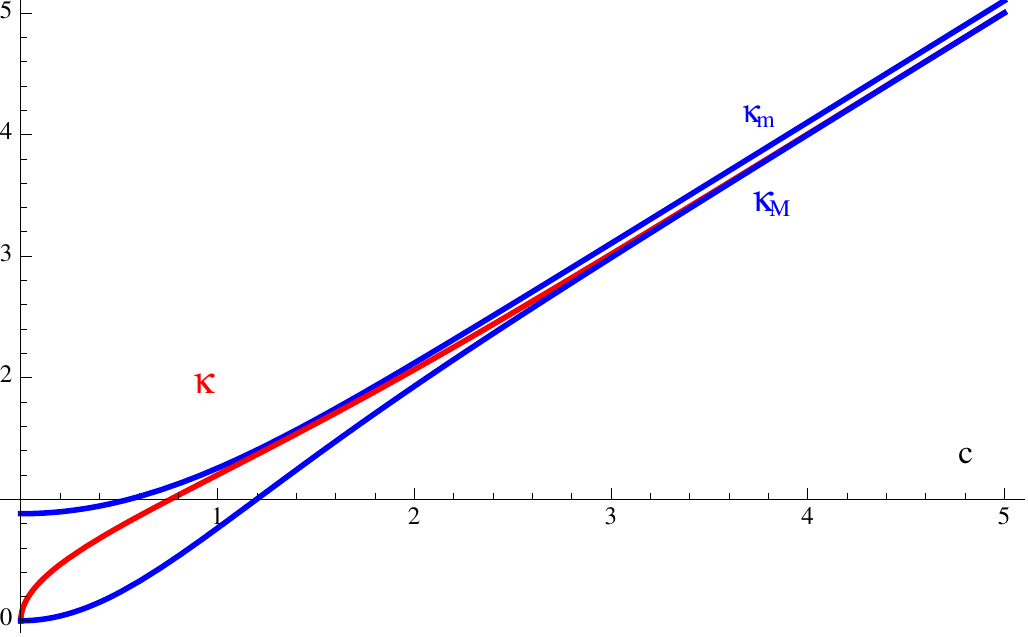}}
 \caption{{Bounds on the negative energy edge mode for a particle on [0,1] with Neumann boundary condition at 0 and
 Robin boundary condition $\phi'(1)=c\phi(1)$\,, $c>0$\,. The red curve is the exact solution and the upper and lower bounds correspond to
 $\kappa_m(c)=(\frac1{10} + \frac{c}{\tanh c})\tanh(\frac1{10} + \frac{c}{\tanh c})$ and $ \kappa_M(c)=c \tanh c$\,.}}\label{fig:2}
  \end{figure}
  
The relations above show the dependence on the various parameters of the problem.

  The solution $\kappa( c)$ of \eref{nrbc} can be found numerically \cite{Ib13}, but it can also be bounded from above, $\kappa(c)<\kappa_m(c)$\,,
by the following function:
  	\begin{equation}\label{bb}
		\kappa_m(c)=\textstyle{\frac1{\epsilon}(\frac1{10} + \frac{ \epsilon c}{\tanh  \epsilon c}})\tanh(\frac1{10} + \frac{\epsilon c}{\tanh \epsilon c})\;.
			\end{equation}
(See Fig.\ \ref{fig:2} where we set $\epsilon=1$\,).

This bound provides an explicit analytic bound for the quadratic form:
 
 {  
 \begin{equation}
   \scalar{\Phi}{-\Delta_{\Xi}\Phi}_{\Xi}\geq -
   \left[\kappa_m(\textstyle{\frac{\bar{\mu}}{1-\delta}})\right]^2   \scalar{\Phi}{\Phi}_{\Xi}\,,
   \label{nbound}
   	\end{equation}
	}
	i.e., $\mu_0=\,\kappa_m(\textstyle{\frac{\bar{\mu}}{1-\delta}})$.

	Let us try to give some further estimates. Assume that $\epsilon\kappa\gg1$\,. Then $\kappa\tanh(\epsilon\kappa)\approx\kappa$ and thus the spectral condition becomes $\kappa\approx c$ whenever $\epsilon\kappa\approx\epsilon c\gg 1$\,. In the case that $c=\bar{\mu}/ {(1-\delta)}$\,, this gives 
	$$\mu_0\equiv\kappa\approx \frac{\bar{\mu}}{{(1-\delta)}}{
	}$$ 
whenever $\epsilon\kappa\gg 1$\,, that is, whenever $\bar{\mu}/{{(1-\delta)}}\approx\mu_0\gg \frac{1}{\epsilon}$\,.

	Now assume that $\epsilon\kappa\ll 1$\,. In this case we have that $\kappa\tanh(\epsilon\kappa)\approx \epsilon\kappa^2$ and thus the solution of the spectral equation becomes $\kappa\approx \sqrt{c/\epsilon}$ whenever $\epsilon\kappa\approx \sqrt{\epsilon c}\ll 1$.  Thus, we have that 
		$$\mu_0\approx
	\sqrt{\frac{\bar{\mu}}{\epsilon{({1-\delta})}}}\quad \text{whenever}\quad \frac{\bar{\mu}}{{1-\delta}}\ll\frac{1}{\epsilon}\;.$$

\subsection{Existence of edge states with Robin boundary conditions} \label{sectionedge}

In the previous section  we have shown that for Robin boundary conditions the Laplace-Beltrami operator is always bounded from below. We will show now that  when the Robin parameter is positive, ${\mu}>0$\,, negative energy edge states appear for large positive values of $\mu$. This result has been proved in
 \cite{As05} in a more general framework so that it applies to our particular case. However the bounds that we are going to obtain are sharper. To prove the existence of such a negative energy eigenstate it will be enough to obtain a negative upper bound for the expectation value function of the Laplace-Beltrami operator for some function in an appropriate space.

From Green's identity it is straightforward to obtain the upper bound
	\begin{align}
	\scalar{\Phi}{-\Delta_M\Phi}_M= \scalar{\nabla\Phi}{\nabla\Phi}_M-\scalar{\varphi}{\mu\varphi}_{\partial M}\leq\scalar{\nabla\Phi}{\nabla\Phi}_M-\underline{\mu}\scalar{\varphi}{\varphi}_{\pM}, \label{cota}\
	\end{align}
 with $\Phi\in\D_{M}$ and where $\underline{\mu}:=\inf \mu(p)$. This implies that the quadratic form associated to the Laplace-Beltrami operator $-\Delta_M$
 with local Robin boundary condition given by $\mu(\bt)$ is bounded above by the form associated with the Robin boundary condition with constant parameter $\underline{\mu}$.
 
 We shall need the following slight generalisation of Davies' Theorem, \cite[Theorem 7.2.1]{Da95}. It states that smooth functions with any value of the normal derivatives at the boundary are dense in $\H^1(M)$\,, cf. \cite[Lemma~4.1]{ibort13}.
 
 \begin{lemma}\label{lem:approxdotphi}
Let $\Phi\in\H^1(M)$ and $f\in\H^{1}(\pM)$. Then, for every $\epsilon>0$ there exists $\tilde{\Phi}\in\mathcal{C}^\infty(M)$ such that
$\norm{\Phi-\tilde{\Phi}}_1<\epsilon$, $\norm{\varphi-\tilde{\varphi}}_{\H^{1/2}(\pM)}<\epsilon$ and
$\norm{f-\dot{\tilde{\varphi}}}_{\H^{1}(\pM)}<\epsilon$\,.
\end{lemma}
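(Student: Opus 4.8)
The plan is to prove the density statement of Lemma~\ref{lem:approxdotphi} by reducing it to two ingredients: an approximation result in the bulk and a separate control of the boundary data of the normal derivative. The statement asks for a single smooth $\tilde\Phi$ that simultaneously approximates $\Phi$ in $\H^1(M)$, has boundary trace $\tilde\varphi$ close to $\varphi$ in $\H^{1/2}(\pM)$, and whose normal derivative trace $\dot{\tilde\varphi}:=\nabla_{\mathbf n}\tilde\Phi|_{\pM}$ is prescribed to be $\H^1(\pM)$-close to the \emph{arbitrary} given $f$. The key observation is that the first two conditions are essentially the standard density of $\mathcal C^\infty(M)$ in $\H^1(M)$ together with continuity of the trace map $\H^1(M)\to\H^{1/2}(\pM)$, so the real content is the freedom to impose an essentially independent value $f$ on the normal derivative. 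This is exactly why the statement is called a ``slight generalisation of Davies' Theorem'': Davies gives density of smooth functions in $\H^1$, but here we additionally want to pin the normal derivative to a target $f$ that has nothing to do with $\Phi$.

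First I would work in the collar coordinates $\Xi\cong[R_0-\epsilon,R_0]\times\pM$ already introduced in \eqref{split M S2}--\eqref{split metric S2}, so that near the boundary the metric splits as in \eqref{split metric S2} and the normal derivative is simply $\partial/\partial r$. By the ordinary density of $\mathcal C^\infty(M)$ in $\H^1(M)$, choose $\Psi\in\mathcal C^\infty(M)$ with $\norm{\Phi-\Psi}_{\H^1(M)}$ as small as desired; by continuity of the trace this also controls $\norm{\varphi-\psi}_{\H^{1/2}(\pM)}$, where $\psi=\Psi|_{\pM}$. The normal derivative $\dot\psi$ of this $\Psi$ is whatever it happens to be and need not be close to $f$. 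The plan is then to add a correction supported in a thin sub-collar that adjusts the normal derivative to $f$ while contributing negligibly to the $\H^1(M)$ and $\H^{1/2}(\pM)$ norms. Concretely I would set
\begin{equation*}
	\tilde\Phi(r,\bt)=\Psi(r,\bt)+\chi(r)\,(r-R_0)\,\bigl(f(\bt)-\dot\psi(\bt)\bigr)\;,
\end{equation*}
where $\chi$ is a smooth cutoff equal to $1$ in a neighbourhood of $r=R_0$ and supported in a layer of width $\eta\ll\epsilon$. At $r=R_0$ the correction term vanishes, so the trace is unchanged, $\tilde\varphi=\psi$; differentiating in $r$ at $R_0$ gives $\dot{\tilde\varphi}=\dot\psi+(f-\dot\psi)=f$ exactly, since $\chi(R_0)=1$. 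Thus the third condition is met with room to spare, and I would then smooth $f$ in the $\H^1(\pM)$ topology if a genuinely smooth final function is demanded.

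The main obstacle—and the step deserving care—is showing that the corrective term is small in $\H^1(M)$. Because the factor $(r-R_0)$ multiplying the cutoff is $O(\eta)$ on the support of $\chi$, the function value contributes $O(\eta)$ in $\L^2$; the tangential derivatives $\d_{\bt}$ likewise carry a factor $(r-R_0)=O(\eta)$; but the radial derivative of the term, coming from $\chi'(r)(r-R_0)$ and from $\chi(r)$ itself, is $O(1)$ pointwise, and is supported on a layer of volume $O(\eta)$, so its $\L^2$-norm is $O(\eta^{1/2})$. Hence the whole correction is $O(\eta^{1/2})$ in $\H^1(M)$ and can be made arbitrarily small by shrinking $\eta$, while the prescribed normal derivative $f$ is unaffected by this shrinking because it is imposed exactly at $r=R_0$. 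The delicacy is precisely this decoupling: one must verify that the $\H^1(M)$ cost of steepening the profile to realise a large target $f$ does \emph{not} blow up as $\eta\to0$, which works here only because the target is fixed at the boundary and the large radial gradient lives on a vanishingly thin set. I would conclude by invoking \cite[Lemma~4.1]{ibort13} to package the smoothing of $f$ in $\H^1(\pM)$ and the final diagonal choice of parameters, selecting $\eta$ and the bulk approximation error both below the prescribed $\epsilon$.
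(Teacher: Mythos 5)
Your construction is correct and is essentially the paper's own argument: the paper gives no proof of this lemma but defers to \cite[Lemma~4.1]{ibort13}, whose proof is precisely this collar correction $\chi(r)(r-R_0)\bigl(f-\dot\psi\bigr)$ added to a smooth bulk approximant $\Psi$, with the trace left untouched and the radial-derivative cost $O(\eta^{1/2})$ on a layer of shrinking width, exactly as you compute. One small repair: your closing appeal to \cite[Lemma~4.1]{ibort13} is circular, since that is the very statement being proved; the smoothing of $f$ requires only the elementary density of $\mathcal{C}^\infty(\pM)$ in $\H^{1}(\pM)$ on the closed manifold $\pM$, after which $\dot{\tilde\varphi}$ equals the smoothed target exactly and a diagonal choice of the bulk error, the smoothing error and $\eta$ finishes the proof.
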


As in the proof of Theorem~\ref{thm:scalarthm}, it is enough to obtain the bound \eqref{cota} for  some function in $\Phi\in\overline{\D_M}^{\norm{\cdot}_{\H^1(M)}}$\,. The lemma above establishes that $\overline{\D_M}^{\norm{\cdot}_{\H^1(M)}}=\H^1(M)$\,. We can use again the splitting of the manifold showed in Fig.~\ref{fig:split M} and for functions $\Phi\in\H^1(\Xi)\oplus\H^1(M\backslash \Xi)$ we can split the quadratic form as 
 
 \begin{equation}
	\scalar{\Phi}{-\Delta_{_{M}}\Phi}_M =\scalar{\Phi}{-\Delta_{_{\Xi}}\Phi}_\Xi + \scalar{\Phi}{-\Delta_{_{M\backslash\Xi}}\Phi}_{M\backslash\Xi}\;.
	\label{semiboundXibb}
 \end{equation}

Now, consider functions on   $\phi\in\D_{\Xi}$ which only depend on the radial coordinate $r$,
$\phi(r)$, and extend them to the whole $M$ by defining 
\begin{equation}\label{eq:edgestateshape}
	\Phi_{\Xi}(x) =
		\begin{cases}
			\phi(r) &, x = (r, \bt) \in \Xi \\
			\phi(R_0-\epsilon) &, x \in M\backslash\Xi\\ 
		\end{cases}\;.
\end{equation}

We have that 
	\begin{align}
		\scalar{\Phi_{\Xi}}{-\Delta_{M}\Phi_{\Xi}}_{M}&=\scalar{\Phi_{\Xi}}{-\Delta_{\Xi}\Phi_{\Xi}}_{\Xi}\notag\\
		&\leq\int_{\partial M}\int_{R_0-\epsilon}^{R_0} \frac{\partial\bar{\Phi}_{\Xi}}{\partial r}\frac{\partial\Phi_{\Xi}}{\partial r}\d\mu_{g_{_{\Xi}}}-\underline{\mu}\scalar{\phi_{\Xi}}{\phi_{\Xi}}_{\partial M}\notag\\
								& \leq (1+\delta)\int_{\partial M}\left[\int_{R_0-\epsilon}^{R_0} \Bigl[ \frac{\partial\bar{\Phi}_{\Xi}}{\partial r}\frac{\partial\Phi_{\Xi}}{\partial r}\Bigr] \d r -\frac{\underline{\mu}}{1+\delta}|\phi_{\Xi}|^2 \right]\d\mu_{\tilde{g}}.\label{nadabb}
	\end{align}
	
Thus, since the quadratic forms at both sides are defined on the same domain, $\H^1(M)$, it is enough to find an upper bound of the one-dimensional problem \eqref{laplacian interval2} of the Laplacian on the interval $[0,\epsilon]$ with Robin boundary condition at $r=0$ and Neumann boundary condition  
	 at $r=\epsilon$.

As we have seen in the previous section, this one-dimensional operator has an edge state $\Phi_0$ which is given by the zero of the spectral function
\eref{nrbc} with Robin parameter $c={\underline{\mu}}/{(1+\delta)}$.
The solution of equation \eqref{nrbc}, $\kappa(c)$\,, is bounded from below\,, $\kappa_M(c)<\kappa(c)$,
by
{
  	\begin{equation}\label{ba}
	\kappa_M(c):=c \tanh \epsilon c.
	\end{equation}
	}
Thus if we take $\Phi_\Xi(r,\bt)=\Phi_0(r)$ for $(r,\bt)\in\Xi$ we get:
	\begin{equation}
		\scalar{\Phi_\Xi}{-\Delta_{_{M}}\Phi_\Xi}_M\leq -
		\left[\kappa_M(\textstyle{\frac{{\underline{\mu}}}{1+\delta}})\right]^2 \!  \scalar{\Phi_\Xi}{\Phi_\Xi}_{\Xi},\label{bound2}
   	\end{equation}
which shows the existence of negative modes of the Laplace-Beltrami operator. Notice that even though the function $\kappa_M(c)$ is positive, the eigenvalue associated to the boundary condition defined by $c$ is $-\kappa_M(c)^2$\,. These modes are edge states
and are localised at the boundary of $M$. Notice that this upper bound for edge states is lower than the bound found in Ref. \cite{As05}, although 
the latter one applies to more general boundary conditions. 
Every parameter, including the value of $\kappa_M(\textstyle{\frac{{\underline{\mu}}}{1+\delta}})$ depends only on the geometry of the neighbourhood of the boundary $\Xi$ (see e.g. \cite{PP15,kp15}).

In summary, combining the  two compatible bounds for edge states, we have the very constrained chain of inequalities
 \begin{equation}\nonumber
  -  \left[\kappa_m(\textstyle{\frac{{\bar{\mu}}}{1-\delta}})\right]^2  \!\! \scalar{\Phi_\Xi}{\!\Phi_\Xi}_M\,\leq
      \scalar{\Phi_\Xi}{\!\minus\Delta_{M}\Phi_\Xi}_M 
   \leq \! \minus\, 
\left[\kappa_M(\textstyle{\frac{{\underline{\mu}}}{1+\delta}})\right]^2 \!\!  \scalar{\Phi_\Xi}{\!\Phi_\Xi}_{\Xi}\,.
   	\end{equation}
However,  there is a significant difference between the two bounds. The upper bound shows the existence of edge states with
negative energy. This is so because it means that there is at least one eigenvalue with negative energy. However, the lower bound applies to any state and establishes a global lower bound which guarantees stability of
the vacuum for theories with mass larger than this bound.

In the asymptotic regime $\bar{\mu}>\underline{\mu}\gg\frac1{R_0}$\,, the bounds become
 \begin{equation}\label{eq:asyntotic1}
     -\textstyle{\frac{\bar{\mu}^2}{{(1-\delta)^2}}}  \scalar{\Phi_\Xi}{\Phi_\Xi}_M\,\leq
      \scalar{\Phi_\Xi}{-\Delta_{M}\Phi_\Xi}_M  
   \leq -\textstyle{\frac{{\underline{\mu}}^2}{{(1+\delta)^2}}} \scalar{\Phi_\Xi}{\Phi_\Xi}_{\Xi}.
   	\end{equation}
{	Since the inequalities hold for any value of $\delta\in (0,1)$ they become even more stringent,
 \begin{equation}\label{eq:asyntotic2}
     -\textstyle{{\bar{\mu}^2}{{}}}  \scalar{\Phi_\Xi}{\Phi_\Xi}_M\,\leq
      \scalar{\Phi_\Xi}{-\Delta_{M}\Phi_\Xi}_M  
   \leq -\textstyle{{{\underline{\mu}}^2}{{}}} \scalar{\Phi_\Xi}{\Phi_\Xi}_{\Xi},
   	\end{equation}
in agreement with the asymptotic  results of Refs. \cite{PP15,kp15}.
}
For small sizes  $\frac{1}{\epsilon}>\frac1{R_0}\gg\bar{\mu}>\underline{\mu}$\,, the bounds become
 \begin{equation}\label{eq:asyntotic3}
     -\textstyle{\frac{\bar{\mu}}{\epsilon{(1-\delta)}}}\scalar{\Phi_\Xi}{\Phi_\Xi}_M\,\leq  \scalar{\Phi_\Xi}{-\Delta_{M}\Phi_\Xi}_M   \leq -\textstyle{\frac{{\underline{\mu}}}{{\epsilon(1+\delta)}} } \scalar{\Phi_\Xi}{\Phi_\Xi}_{\Xi}\;.
   	\end{equation}
This state becomes a zero mode in the case of Neumann boundary conditions ($\mu=0$).
 
The above inequalities show that negative energy states appear as long as $\mu>0$\,. The value of the energy gets progressively more negative the larger is the value of the Robin parameter, or equivalently the smaller is the manifold. The splitting procedure used for the proof shows that the lowest eigenstate can be approximated in the direction normal to the boundary by the lowest eigenstate of the 1 dimensional Laplacian with Robin boundary conditions. The latter, as it is easy to check, gets more localised at the boundary as the value of $\mu$ increases.

In summary, we have proved the following theorem.
	
{\begin{theorem} Let $M$ be an oriented Riemannian manifold with regular oriented boundary $\partial M$. 
The Laplace-Beltrami operator $-\Delta$ restricted to smooth functions satisfying  the Robin boundary conditions 
\eref{bcp2}
has at least one negative eigenvalue. As   $\underline{\mu}=\inf \mu $  becomes large  the corresponding eigenfunction gets more localised on a small
collar $\Xi$ around  the boundary $\pM$.
\end{theorem}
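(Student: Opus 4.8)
The plan is to prove the two assertions in turn: existence of a negative eigenvalue by a variational argument, and localization of the corresponding eigenfunction as $\underline{\mu}\to\infty$. For existence I would start from the fact, established in Theorem~\ref{thm:scalarthm}, that $-\Delta_M$ with the Robin boundary condition \eref{bcp2} is self-adjoint and bounded below; on a compact $M$ its resolvent is compact, so the spectrum is discrete and the bottom of the spectrum is a genuine eigenvalue equal to the infimum of the Rayleigh quotient $\scalar{\Phi}{-\Delta_M\Phi}/\norm{\Phi}^2$ over the form domain. By Lemma~\ref{lem:approxdotphi} this form domain is all of $\H^1(M)$, so it is enough to exhibit a single $\Phi\in\H^1(M)$ with strictly negative quotient. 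Green's identity together with $\mu(p)\geq\underline{\mu}$ bounds the quadratic form from above by the one with constant Robin parameter $\underline{\mu}$, as in \eqref{cota}, which is the comparison I would exploit.

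Next I would use the collar decomposition of Fig.~\ref{fig:split M} and choose the trial function $\Phi_\Xi$ of \eqref{eq:edgestateshape}: purely radial inside $\Xi$ and equal to the constant $\phi(R_0-\epsilon)$ on $M\backslash\Xi$. This collapses the Rayleigh quotient to the one-dimensional spectral problem \eqref{laplacian interval2} on $[0,\epsilon]$, with Neumann condition at the inner end and Robin condition with parameter $c=\underline{\mu}/(1+\delta)$ at $\pM$. As shown in the proof of Step~\ref{step:2}, for $c>0$ the spectral equation \eqref{nrbc} possesses a unique imaginary root $k=i\kappa$, i.e.\ a genuine bound state $\Phi_0$ of negative energy $-\kappa^2$. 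Inserting $\Phi_0$ into the estimate \eqref{nadabb} gives the upper bound \eqref{bound2}, namely $\scalar{\Phi_\Xi}{-\Delta_M\Phi_\Xi}_M\leq -[\kappa_M(c)]^2\scalar{\Phi_\Xi}{\Phi_\Xi}_\Xi<0$ with $\kappa_M(c)=c\tanh\epsilon c$. The strict negativity of the right-hand side forces the bottom of the spectrum below zero and hence produces the negative eigenvalue.

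For the localization I would note that the normal profile of this trial state is the explicit one-dimensional bound state $\Phi_0(r)\propto\cosh(\kappa r)$, which, measured inward from $\pM$, decays on the scale $1/\kappa$. The spectral equation \eqref{nrbc} gives $\kappa\approx\underline{\mu}/(1+\delta)$ in the regime $\underline{\mu}\gg1/\epsilon$, so $1/\kappa\to0$ and the mass of the state concentrates in an ever-thinner collar as $\underline{\mu}$ grows. The main obstacle, and the only genuinely delicate point, is to transfer this from the trial function to the \emph{true} lowest eigenfunction, since the variational estimate controls the eigenvalue but not the eigenstate. I expect the cleanest route is to squeeze the true ground-state energy between the lower bound $-[\kappa_m(\bar{\mu}/(1-\delta))]^2$ of \eqref{nbound} and the upper bound \eqref{bound2}, both of which behave like $-\underline{\mu}^2$ for large $\mu$, and then invoke an Agmon-type exponential decay estimate: the gap between this eigenvalue (of order $-\underline{\mu}^2$) and the interior spectrum (bounded below by a fixed constant independent of $\mu$) grows like $\underline{\mu}^2$, which forces the true eigenfunction to decay inward on the length scale $1/\kappa\sim1/\underline{\mu}$ and therefore to be localized in $\Xi$.
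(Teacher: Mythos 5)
Your existence argument is essentially the paper's own: the same comparison \eref{cota} reducing to the constant Robin parameter $\underline{\mu}$, the same collar decomposition of Fig.~\ref{fig:split M}, the same radial trial function \eref{eq:edgestateshape} built from the one-dimensional bound state of \eref{laplacian interval2}, and the same resulting upper bound \eref{bound2}; your preliminary observation that on compact $M$ the resolvent is compact, so the negative infimum of the Rayleigh quotient is actually attained by an eigenfunction, usefully makes explicit a step the paper leaves tacit (note the paper assumes only $\pM$ compact, so for non-compact $M$ you would additionally need that the essential spectrum lies in $[0,\infty)$ to convert negative spectrum into eigenvalues). Where you genuinely diverge is the localisation claim: the paper settles it with a single heuristic sentence, asserting that the splitting procedure shows the lowest eigenstate is approximated in the normal direction by the one-dimensional Robin ground state, which visibly concentrates at the boundary as $\mu$ grows. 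You correctly flag this as the delicate point --- the variational computation controls the eigenvalue, not the eigenfunction --- and propose instead an Agmon-type decay estimate: since the ground-state energy is at most $-\left[\kappa_M(\underline{\mu}/(1+\delta))\right]^2\sim-\underline{\mu}^2$ while the operator is locally non-negative in the interior, the true eigenfunction must decay into the bulk on the scale $1/\underline{\mu}$. This is the rigorous route followed in the asymptotic literature the paper itself cites \cite{PP15,kp15}, and it buys an actual proof of the second assertion where the paper offers only plausibility. One correction to your squeeze, which happily does not damage the argument: the lower bound \eref{nbound} involves $\bar{\mu}=\sup\mu$, so it scales like $-\bar{\mu}^2$ rather than $-\underline{\mu}^2$ unless $\mu$ is nearly constant; but the Agmon estimate only requires the one-sided upper bound on the eigenvalue together with interior non-negativity, so the two-sided pinch is not needed.
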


	The local character of the Robin boundary condition allowed us to find a lower bound in the spectrum. However, there are more singular boundary conditions which do not have such a bound for any size of the physical
system \cite{As05,sachin}.

\section{Edge States of the electromagnetic field}\label{sectionEM}

In pure electrodynamics on the space-time $M\times {\mathbb R}$ the  dynamical fields 
are one-forms
$A\in\Lambda^1(M)$ over the space $M$. The corresponding potential induced by the magnetic term of Maxwell action is up to a surface term
\begin{equation}
	\mathcal{V}(A)=\frac{1}{2e^2}\int_M F\wedge \star F=\frac{1}{2e^2} \int_M \d A\wedge \star \d A= \frac{1}{2e^2} \scalar{A}{ \d^\ast \d A},
	\label{potential}
\end{equation}
where $\d^\ast$ is the codifferential  and $\scalar{\cdot}{\cdot}$ is the canonical scalar product of one-forms
\begin{equation}\label{eq:scalaroneforms}
	\scalar{\alpha}{\beta}=\int_M g^{-1}(\alpha,\beta) \d\mu_{g},\quad \alpha,\beta\in\Lambda^1(M)\;,
\end{equation}
induced by  the Riemannian metric  $g$ of $M$.  Here, $g^{-1}\scalarp{\cdot}{\cdot}$  denotes the canonical scalar product of  forms induced by $g$,  $\d\mu_g$ the  Riemannian volume on $M$ and  $\star:\Lambda^k(M)\to\Lambda^{n-k}(M)$ the Hodge star operator, i.e.  the unique operator that verifies
\begin{equation}\label{def:hodgestar}
	g^{-1}\scalarp{\alpha}{\beta}\d\mu_{g}=\alpha\wedge\star\beta,\quad\forall \alpha,\beta\in\Lambda^k(M)\;,
\end{equation}
where $n$ is the dimension of $M$\,.
The following identities will be needed later on, cf.~\cite{Wa71}. Let $\alpha,\beta\in \Lambda^k(M)$\,.
\begin{equation}\label{eq:scalarforms}
	\scalar{\alpha}{\beta} = \int_M g^{-1}\scalarp{\alpha}{\beta}\d\mu_{g}=\int_M \alpha\wedge\star\beta\;,\\
\end{equation}
\begin{equation}
	\scalar{\alpha}{\beta}=\scalar{\star\alpha}{\star\beta},\quad \star 1=\d\mu_{g},\quad \star\d\mu_{g}= 1,\quad 
	\star\star\alpha=(-1)^{k(n-k)}\alpha\;.
\end{equation}
In quantum electrodynamics the Hodge-de Rham Laplace operator 
\begin{equation}
-\Delta=\d^\ast \d
\label{dR}
\end{equation}
plays the same role as the Laplace-Beltrami operator in  scalar field theories. 

Using the Hodge star operator, the Laplace operator defined in Eq.~\eqref{dR} can be written, for $A\in\Lambda^1(M)$\,, as
\begin{equation}\label{def:laplaceoneform}
	\d^\ast\d A=(-1)^{n+1}\star\d\star\d A\;.
\end{equation}

Gauge invariance implies that the Hodge-de Rham-Laplace operator \eref{dR} has an infinite number of zero-modes of the form $A=\d\phi$. One can get rid of these spurious zero-modes by fixing the Coulomb gauge condition $\d^\ast A=0$ or using Feynmann gauge by adding a term of the form $\d \d^\ast$ to the operator \eref{def:laplaceoneform} 
to get the standard Hodge-de Rham Laplace operator
\begin{equation}
-\Delta_1=\d^\ast \d+ \d\, \d^\ast.
\label{HdR}
\end{equation}

In the rest of this section we shall prove that for a particular choice of Robin-like boundary conditions the non-elliptic operator $\d^\ast\d$ is bounded from below independently of the gauge fixing condition chosen.

\subsection{Boundary term for Laplace operator on one-forms}\label{subsec:BoundaryTerm}

As in the previous section, we will use the quadratic forms associated to the operators rather than the operators themselves to obtain the bounds. One important ingredient in the proof of Theorem~\ref{thm:scalarthm} was the use of Stokes Theorem to identify the boundary term associated to the Laplace operator. We shall first derive Green's identity for the Laplace operator \eqref{dR} and write the boundary term in an explicit form. 

 Using the identities above, one can get for $A\in\Lambda^1(M)$\,,
\begin{subequations}\label{subequations:green}
\begin{align}
	\scalar{A}{-\Delta A}&=(-1)^{n+1}\int_Mg^{-1}\scalarp{A}{\star\d\star\d A}\d\mu_{g}\label{subequations:greena}\\
		&=(-1)^{n+1}\int_M A\wedge\star\star\d\star\d A
		=\int_M A\wedge\d\star\d A\label{subequations:greenc}\\
		&=\int_M \d A \wedge \star \d A - \int_M \d\left( A \wedge\star\d A \right)\label{subequations:greend}\\
		&=\scalar{\d A}{\d A}-\int_{\pM}j^*(A \wedge\star\d A),\label{subequations:greene}
\end{align}
\end{subequations}
where $j:\partial M \hookrightarrow M$ is the canonical embedding of the boundary $\pM$
into the bulk manifold $M$ and $j^\ast$ denotes the corresponding pullback map of forms. 
In \eqref{subequations:greend} we have used the identity
$$\d( A \wedge\star\d A)=\d A \wedge\star\d A - A\wedge\d\star\d A$$
and in \eref{subequations:greene} the general Stokes' Theorem, cf. \cite[Theorem 8.2.8]{marsden01}.

Using  standard algebra of imbedding maps \cite{Du52}, the boundary term in \eref{subequations:greene} 
can be  rewritten as
	\begin{align}
		\int_{\pM}j^*(A\wedge\star\d A)&=\int_{\pM}j^*A\wedge\star_{\pM}j^*\mathrm{i}_{\mathbf{\cal{N}}}\d A  \notag\\
			&=\int_{\pM}\tilde{g}^{-1}\scalarp{j^*A}{j^*\mathrm{i}_{\mathbf{\cal{N}}}\d A}\d\mu_{\tilde{g}} \notag\\
			&=\int_{\pM}\tilde{g}^{-1}\scalarp{a}{\dot{a}}\d\mu_{\tilde{g}}=\scalar{a}{\dot{a}}_{\pM}\;,\label{eq:boundarytermof}
	\end{align}
where $\tilde{g}$ is the pull-back of $g$ to $\pM$, $\star_{\pM}$ denotes the Hodge star operator with respect to $\tilde{g}$,  {$\cal{N}$ the vector field defined in a neighborhood  of $\partial M$   by the geodesic flow induced by the outward normal vector $\mathbf n$ of the boundary $\partial M$,  and $\mathrm{i}_{\cal{N}}$ is the contraction with $\cal{N}$}. Also we have defined $a:=j^*A$ and $\dot{a}:=j^*\mathrm{i}_{\cal{N}}\d A$ maintaining the capital-small letter notation to distinguish objects on the manifold $M$ and its boundary $\pM$\,. Finally $\scalar{\cdot}{\cdot}_{{\pM}}$ stands for the canonical scalar product among one-forms at the boundary:
\begin{equation}\label{eq:scalaroneformsb}
	\scalar{\alpha}{\beta}_{\pM}=\int_{\pM} \tilde{g}^{-1}(\alpha,\beta) \d\mu_{\tilde{g}}=\int_{\pM} \alpha \wedge\star_{\pM}\beta\,.
\end{equation}
Compare the latter equation with Eq.~\eqref{eq:scalaroneforms}.

\subsection{Robin-like boundary conditions and semiboundedness}

In spite of the fact that the Laplace-de Rham operator $\d^\ast \d$ is gauge invariant like the original potential $\cal{V}$ in \eref{potential}, the boundary term 
\eref{eq:boundarytermof} is not manifestly gauge invariant. Thus, the most general boundary conditions break gauge invariance. However, there are
large families of boundary conditions which preserve gauge invariance, e.g. the Dirichlet boundary condition ($a=0,a_r=j^\ast \mathrm{i}_{\cal{N}} A=0$) or the Neumann
boundary condition $\dot a=0$.  One particularly interesting family preserving  the gauge symmetry is given by Robin-like boundary conditions defined by
\begin{subequations}\label{subequations:bconeforms}
\begin{align}
	 & j^\ast {\cal{L}}_{\cal{N}}  (A-\d (-\Delta_0)^{-1} \d^\ast A)= K j^\ast (A-\d(-\Delta_0)^{-1} \d^\ast A)\\
	&\ast_{\pM} j^\ast{\cal{L}}_{\cal{N}} 
	\ast(A-\d (-\Delta_0)^{-1} \d^\ast A)= 0,
\end{align}
\end{subequations}
where ${\cal{L}}_{\cal{N}}$ is the Lie derivative with respect to the geodesic vector field ${\cal{N}}$ defined in a neighborhood  of $\partial M$  by the geodesic flow induced by the normal vectors $\mathbf{n}$
of $\partial M$, $\Delta_0$ is the Laplace operator acting on scalars with Dirichlet boundary conditions and 
 $$K:\pM\to \mathbb{R}^{(n-1)\times (n-1)}\;$$ is a function with values in the space of ${(n-1)\times (n-1)}$ real orthogonal matrices (as required by the reality of the electromagnetic potential  $A$)  with respect to the pointwise scalar product among one-forms defined by $\tilde{g}^{-1}\scalarp{\cdot}{\cdot}$.

The invariance of the  boundary conditions \eref{subequations:bconeforms} under gauge transformations $ \phi$ which vanish at
the boundary $j^\ast \phi=0$ follows from the fact that in such a case
\begin{equation}\label{eq:gaugeinvariance}
	\d \phi-\d (-\Delta_0)^{-1} \d^\ast \d \phi=0.
\end{equation}

Notice that since $\Delta_0$ is the Laplace operator acting on scalars with Dirichlet boundary conditions, the boundary conditions \eref{subequations:bconeforms}
 reduce to 
 {
 \begin{subequations}
\label{subequations:bconeformsb}
\begin{align}
	 & j^\ast {\cal{L}}_{\cal{N}}  (A-\d (-\Delta_0)^{-1} \d^\ast A)(p)= K(p) j^\ast A(p), \label{subequations:bconeformsba}\\
	& \ast_{\pM} j^\ast{\cal{L}}_{\cal{N}}\ast (A-\d (-\Delta_0)^{-1} \d^\ast A)(p)= 0 \label{subequations:bconeformsbb}
\end{align}
\end{subequations}
for any $p\in\partial M$}.

The most relevant property of  these boundary conditions
is the cancellation of the   boundary term involved in the proof of self-adjointness  
of the Laplace-de Rham operator $\d^\ast \d$. Indeed, 
 such a boundary term  can be rewritten as
\begin{eqnarray}\label{eq:boundterm}
\!\!\!\!\!\!\!\!\!\!&&\!\!\!\!\!\!\!\!\!\!\!\!\!\!\scalar{A}{-\Delta \tilde{A}}-\scalar{\tilde{A}}{-\Delta A}=\scalar{\tilde{a}}{\dot{a}}_{\pM}-\scalar{a}{\dot{\tilde{a}}}_{\pM}\nonumber\\
&=&\int_{\pM} j^\ast \tilde{A}\wedge \ast_{\partial M}   j^\ast {\cal{L}}_{\cal{N}}  A- \int_{\pM} j^\ast \tilde{A}\wedge \ast_{\partial M}  j^\ast\d  \mathrm{i}_{\cal{N}}  A \nonumber\\
&-&\int_{\pM} j^\ast {A}\wedge \ast_{\partial M}   j^\ast {\cal{L}}_{\cal{N}}  \tilde{A}+ \int_{\pM} j^\ast {A}\wedge \ast_{\partial M}  j^\ast\d  \mathrm{i}_{\cal{N}} \tilde {A} \nonumber\\
\end{eqnarray}
{
or as
\begin{equation}\label{eq:boundterm2}
\begin{array}{r@{}l}
&{\phantom{-.\Biggr( } }\displaystyle \int_{\pM}   j^\ast \tilde{A}\wedge \ast_{\partial M} j^\ast {\cal{L}}_{\cal{N}}   (A-\d (-\Delta_0)^{-1} \d^\ast A) \\
&{\phantom{\Biggr( } }-\displaystyle \int_{\pM}j^\ast \tilde{A} \wedge \ast_{\partial M}  j^\ast\d  \mathrm{i}_{\cal{N}}   (A-\d (-\Delta_0)^{-1} \d^\ast A) \\
&{\phantom{\Biggr( } }-\displaystyle \int_{\pM}   j^\ast A \wedge \ast_{\partial M} j^\ast {\cal{L}}_{\cal{N}}  (\tilde{A}-\d (-\Delta_0)^{-1} \d^\ast \tilde{A}) \\
&{\phantom{\Biggr( } }+\displaystyle \int_{\pM} j^\ast A\wedge \ast_{\partial M} j^\ast \d   \mathrm{i}_{\cal{N}}    (\tilde{A}-\d (-\Delta_0)^{-1} \d^\ast \tilde{A}),
\end{array}
\end{equation}
since
$$\dot{a}=j^\ast {\cal{L}}_{\cal{N}}  A -  j^\ast\d  \mathrm{i}_{\cal{N}}  A= j^\ast {\cal{L}}_{\cal{N}}(A-\d (-\Delta_0)^{-1} \d^\ast A)-j^\ast\d  \mathrm{i}_{\cal{N}}(A-\d (-\Delta_0)^{-1} \d^\ast A).
$$
}
Because of the boundary condition \eref{subequations:bconeformsba}, the contributions of the first and third terms in \eref{eq:boundterm},  reduce to 
\begin{equation}\label{eq:bound2}
\scalar{a}{K {\tilde{a}}}_{\pM}-\scalar{K a}{\tilde{a}}_{\pM};
\end{equation}
which 
cancels out due to the orthogonal character of $K$. 
The remaining terms
{
\begin{equation}\label{eq:bound3}
 \int_{\pM}j^\ast A \,\wedge \ast_{\partial M}j^\ast \d   \mathrm{i}_{\cal{N}}  (\tilde{A}-\d (-\Delta_0)^{-1} \d^\ast \tilde{A})  -\!\int_{\pM} j^\ast \tilde{A}\, \wedge \ast_{\partial M}  j^\ast\d  \mathrm{i}_{\cal{N}}   (A-\d (-\Delta_0)^{-1} \d^\ast A) \nonumber
\end{equation}
can be rewritten as 
\begin{equation}\label{eq:bound4}
 -\int_{\pM} (\d^\ast_{_{\pM}} j^\ast A)\, j^\ast    \mathrm{i}_{\cal{N}}  (\tilde{A}-\d (-\Delta_0)^{-1} \d^\ast \tilde{A}) \,+\int_{\pM} (\d^\ast_{_{\pM}} j^\ast \tilde{A})\,  j^\ast  \mathrm{i}_{\cal{N}}  (A-\d (-\Delta_0)^{-1} \d^\ast A)  \nonumber
\end{equation}
}
by  integration by parts, where $\d^\ast_{_{\pM}}$ is the co-differential operator of $\pM$. Now, since
\begin{equation}\label{transverse}
\d^\ast  (A-\d (-\Delta_0)^{-1} \d^\ast A)= 0,
	\end{equation}
	we have that 
\begin{equation}\label{transverse2}
j^\ast\ast \d \ast  (A-\d (-\Delta_0)^{-1} \d^\ast A)=
{-}\, \d^\ast_{_{\pM}} j^\ast  A +\ast_{\pM} j^\ast{\cal{L}}_{\cal{N}}\ast (A- \d (-\Delta_0)^{-1} \d^\ast A)=0,
	\end{equation}	
	{
because $ \ast_{\pM}j^\ast A=
{ - \,}
 j^\ast  \mathrm{i}_{\cal{N}}\ast A$ and
	\begin{equation}\label{bulk-boundary}
j^\ast \ast \d \ast A= \ast_{\pM}j^\ast {\cal{L}}_{\cal{N}} \ast A\, {-}\, 
\d^\ast_{_{\pM}} j^\ast A.
	\end{equation}
	}
	The second term in \eref{transverse2} vanishes due to the boundary condition \eref{subequations:bconeformsbb}. Then,
	$$ \d^\ast_{_{\pM}}  j^\ast A=0,$$
which implies that the two boundary terms \eref{eq:bound3} also vanish. In summary, the boundary condition \eref{subequations:bconeformsb}
defines a  symmetric extension of the Laplace-de Rham operator.  It is easy to show that this extension is also essentially self-adjoint.

We shall show now that the self-adjoint extension of the Laplace-de Rham operator (Eq.~\eqref{dR}) acting on one-forms  constrained by these  boundary conditions  \eref{subequations:bconeformsb} 
 is bounded from below.

\begin{theorem}\label{thm:oneformthm} Let $M$ be an oriented Riemannian manifold with regular oriented boundary $\partial M$. 
Let $\mathbf n$ be the outgoing normal vector, $\cal{N}$ the vector field defined in a neighborhood  of $\partial M$   by the geodesic flow induced by $\mathbf n$
and $K$ a smooth function with values in ${(n-1)\times (n-1)}$ orthogonal 
matrices,  defined at the boundary $\pM$\,. 
The Laplace-de Rham operator $-\Delta_M$ restricted to smooth one-forms $A\in\Lambda^1(M)$ 
satisfying 
the Robin-like boundary conditions \eref{subequations:bconeformsb}
is essentially self-adjoint and bounded from below, i.e.
	\begin{equation}\label{semiboundoneform}
		\scalar{A}{-\Delta_M A}\geq-\mu_0^2\norm{A}^2 ,
	\end{equation}
where $\mu_0$ is a finite positive constant.
\end{theorem}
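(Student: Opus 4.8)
The plan is to mimic the cobordism strategy of Theorem~\ref{thm:scalarthm}, splitting $M$ into the collar $\Xi$ and its complement $M\backslash\Xi$, and reducing the one-form problem to a radial one-dimensional problem on $I=[R_0-\epsilon,R_0]$. First I would observe that on the complement $M\backslash\Xi$ the contribution is non-negative: from \eqref{subequations:greene} we have $\scalar{A}{-\Delta_{M\backslash\Xi}A}_{M\backslash\Xi}=\scalar{\d A}{\d A}_{M\backslash\Xi}$ minus a boundary term on $\partial\Xi_-$, and by imposing an auxiliary (Neumann-type) boundary condition there that kills this term, the whole expression is $\geq 0$, exactly as in Step~\ref{step:3}. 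The nontrivial negative contribution can then only come from the collar, so the entire estimate reduces to bounding $\scalar{A}{-\Delta_\Xi A}_\Xi$ from below.

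Next I would handle the infinite-dimensional kernel, which is the genuinely new feature relative to the scalar case. The transversality condition \eqref{transverse}, $\d^\ast(A-\d(-\Delta_0)^{-1}\d^\ast A)=0$, says that the physically relevant part of $A$ is co-closed, and because the bounds must be gauge independent I would work with the gauge-invariant combination $\hat A:=A-\d(-\Delta_0)^{-1}\d^\ast A$ throughout, exploiting \eqref{eq:gaugeinvariance} to discard pure-gauge zero modes $\d\phi$ with $j^\ast\phi=0$. Using the explicit boundary term \eqref{eq:boundarytermof}, namely $\scalar{a}{\dot a}_{\pM}$ with $a=j^\ast A$ and $\dot a=j^\ast\mathrm{i}_{\cal N}\d A$, together with Green's identity \eqref{subequations:greene}, I obtain
\begin{equation*}
\scalar{A}{-\Delta_\Xi A}_\Xi=\scalar{\d A}{\d A}_\Xi-\scalar{a}{\dot a}_{\pM}.
\end{equation*}
The boundary condition \eqref{subequations:bconeformsba} replaces $j^\ast{\cal L}_{\cal N}\hat A$ by $Ka$, and in the collar coordinates \eqref{split metric S2} the radial derivative of the tangential components of $\hat A$ plays the role that $\partial_r\Phi$ played for scalars; the quadratic form splits, fibre by fibre over $\pM$, into the radial one-dimensional form plus a non-negative tangential remainder, exactly as in the chain \eqref{nada}.

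With that splitting in hand, I would reduce to the one-dimensional Robin--Neumann problem \eqref{laplacian interval2} component-wise: for each eigen-direction of the orthogonal matrix $K(p)$ one gets a scalar radial operator $-\d^2/\d r^2$ with Robin parameter controlled by the eigenvalues of $K$ and Neumann condition at $\partial\Xi_-$. Since $K$ takes values in orthogonal matrices, its eigenvalues have modulus one, so the effective Robin constants $c$ are uniformly bounded by a constant of order $\sup_{p\in\pM}\|K(p)\|$; the same spectral equation \eqref{nrbc}, $\kappa\tanh(\epsilon\kappa)=c$, then gives a single finite negative lower bound $-\mu_0^2$, with $\mu_0=\kappa_m(\bar c/(1-\delta))$ in the notation of \eqref{bb}. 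Summing the collar and complement quadratic forms as in Step~\ref{step:5} and using $\norm{A}_\Xi^2\leq\norm{A}_M^2$ yields \eqref{semiboundoneform}.

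The main obstacle I anticipate is controlling the non-local operator $\d(-\Delta_0)^{-1}\d^\ast$ near the boundary: one must verify that subtracting this term does not spoil the fibrewise radial splitting and that its boundary restriction behaves compatibly with \eqref{transverse2} and \eqref{bulk-boundary}, so that the co-closedness $\d^\ast_{\pM}j^\ast A=0$ actually holds and the troublesome terms \eqref{eq:bound3} drop out. Equivalently, I must show the decomposition $A=\hat A+\d(-\Delta_0)^{-1}\d^\ast A$ is well-behaved enough on $\H^1$ that the gauge-fixed part $\hat A$ inherits the requisite regularity and trace properties; once that is established, the non-ellipticity of $\d^\ast\d$ is effectively circumvented because on the co-closed sector the operator behaves like the elliptic Laplacian and the scalar argument transfers verbatim.
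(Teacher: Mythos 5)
Your proposal follows essentially the same route as the paper's proof: the collar/complement splitting with an auxiliary Neumann-type condition on $\partial\Xi_-$, the Coulomb-gauge projection $A'=A-\d(-\Delta_0)^{-1}\d^\ast A$ to quotient out the infinite-dimensional kernel of exact forms, and the fibrewise reduction to the one-dimensional Robin--Neumann spectral problem \eqref{nrbc}, closed by the orthogonal decomposition $\norm{A}^2=\norm{A'}^2+\norm{\d\phi}^2$. The only cosmetic deviations are that the paper bounds $-\scalar{a'}{Ka'}_{\pM}\geq-\bar{K}\norm{a'}^2_{\pM}$ directly via the operator norm rather than diagonalising $K$ fibrewise (a real orthogonal $K(p)$ need not be diagonalisable over $\mathbb{R}$, though this does not matter since the norm bound suffices), and the paper's final constant carries the metric eigenvalue factors $\tilde{\lambda}_{\mathrm{max}}/\lambda_{\mathrm{min}}$ that arise because one-form norms involve $g^{-1}$.
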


\begin{proof}
The proof will be almost identical to the proof of Theorem~\ref{thm:scalarthm} and will also be derived in several steps. Only Step~\ref{step:2} differs substantially. Again we shall use the splitting of the manifold into a collar neighborhood of the boundary $\Xi$ and its complement, cf. Fig.~\ref{fig:split M}.

\begin{enumeratesteps}

\item {\bf The collar Laplacian.}\label{step:1of}

Let $-\Delta_\Xi$ be the Laplace operator of Eq.~\eqref{def:laplaceoneform} on the manifold $\Xi$ defined on the domain 
{
	\begin{eqnarray}
	\D_\Xi= \left\{A  \in\H^2_{\Lambda^{1}}(\Xi) \right. &&| j^\ast {\cal{L}}_{\cal{N}} (A-\d (-\Delta_0)^{-1} \d^\ast A)(p)=0
 \,,p\in\partial\Xi_{-}\: ,\nonumber \\
 && j^\ast {\cal{L}}_{\cal{N}}   (A-\d (-\Delta_0)^{-1} \d^\ast A)(p)=K(p)a(p), p\in\pM\: ; \nonumber\\
	& &  \left.  \ast_{\pM} j^\ast{\cal{L}}_{\cal{N}}\ast (A-\d (-\Delta_0)^{-1} \d^\ast A)= 0,  p\in\partial\Xi\right\},
	\label{bcollar}
	\end{eqnarray}}
	where $\partial\Xi=\partial M \cup \partial {\Xi_-}$ and  $\H^2_{\Lambda^{1}}(\Xi)$ is the Hilbert space of one-forms (cf.\ Eq.\ \eqref{eq:scalarforms}) whose coefficients are in $\H^2(\Xi)$\,. This operator is essentially self-adjoint, as shown in the previous subsection.

\item {\bf Bound on the collar Laplacian.}\label{step:2of}
The collar Laplacian and therefore the associated quadratic form are bounded from below, i.e.,
$$\scalar{A}{-\Delta_{\Xi} A}\geq-\mu_0^2\norm{A}_\Xi^2\,,\quad A\in\overline{\D_{\Xi}}^{\norm{\cdot}_{\H^1_{\Lambda^1}(\Xi)}}\;,$$
where $\mu_0$ is of the order of $\sup|K(p)|$ and $|K(p)|$ stands for the norm of the matrix $K(p)$\,. Again the bar over the domain stands for the closure with respect to the corresponding norm. We postpone the proof of this step until the end of this section.

\item {\bf The inner Laplacian.}\label{step:3of}
 
Let $-\Delta_{M\backslash \Xi}$ be the Laplace operator on the manifold $M\backslash \Xi$ defined on the domain
{
\begin{equation}
\begin{array}{r@{}l}
\displaystyle\D_{M\backslash \Xi}=\left\{A\in\H^2_{\Lambda^{1}}\right. &{}\!({M\backslash \Xi}) | j^\ast {\cal{L}}_{\cal{N}} (A-\d (-\Delta_0)^{-1} \d^\ast A)(p)=0,\\
 &{}\displaystyle \left.  \ast_{\pM} j^\ast{\cal{L}}_{\cal{N}}\ast (A-\d (-\Delta_0)^{-1} \d^\ast A)=0\,,p\in\partial\Xi_{-}\right\}.
\end{array}
\end{equation}
}
This operator is positive and by \cite[Theorem 7.2.1]{Da95} its quadratic form has domain $\H^1_{\Lambda^{1}}(M\backslash\Xi)$ without any constraints at the boundary.

\item {\bf Equivalence of the bounds on the operator and on the associated quadratic form.}\label{step:4of}

By the same arguments used in Steps~\ref{step:4} and \ref{step:5} of the proof of Theorem~\ref{thm:scalarthm}, in order to find a bound for $-\Delta_M$ with domain 
{
\begin{eqnarray}\nonumber
\D_M\!=\!\left\{A\in\H^2_{\Lambda^{1}}(M)\right.&& |  j^\ast {\cal{L}}_{\cal{N}}  (A-\!\d (-\Delta_0)^{-1} \d^\ast A)=Kj^\ast A,\\
&&\left.  \ast_{\pM} j^\ast{\cal{L}}_{\cal{N}}\ast (A\!-\!\d (-\Delta_0)^{-1} \d^\ast A)= 0\right\}
\end{eqnarray}
}
it is enough to find the following bound for its quadratic form:
$$\scalar{A}{-\Delta_MA}_M\geq -\mu_0^2\norm{A}^2\,,\quad A\in\overline{\D_M}^{\norm{\cdot}_{\H^1_{\Lambda^1}(M)}}\;.$$
For $A\in\overline{\D_M}^{\norm{\cdot}_{\H^1_{\Lambda^1}(M)}}\subset \overline{\D_\Xi}^{\norm{\cdot}_{\H^1_{\Lambda^1}(\Xi)}}\oplus\H^1_{\Lambda^{1}}(M\backslash \Xi)$ we have that
\begin{align}
\scalar{A}{-\Delta_M A}_M&= \scalar{A}{-\Delta_{\Xi}A}_\Xi+\scalar{A}{-\Delta_{M\backslash\Xi}A}\notag\\
	&\geq -\mu^2_0\norm{A}^2_{\Xi}\geq -\mu^2_0\norm{A}^2_{M}\;,
\end{align}
where we have used \ref{step:2of} and \ref{step:3of}.
\end{enumeratesteps}

\end{proof}

\begin{proof2}[ of Step~\ref{step:2of}]
The proof will follow similar lines to those of Step~\ref{step:2} of the proof of the Theorem~\ref{thm:scalarthm}, but in this case we will bound the quadratic form $\scalar{A}{-\Delta_\Xi A}_\Xi$ by that of a one-dimensional problem with a flat metric.

First we remark that the operator $-\Delta_\Xi $ has an infinite dimensional kernel. More concretely, all the exact one-forms of ${\cal{D}}_\Xi$  \eref{bcollar} are in the kernel of the Laplacian. That is, if $A_\phi=\d \phi \in  {\cal{D}}_\Xi$, then $-\Delta A_\phi=0$. 
One particular class of such exact one-forms $A_\phi=\d\phi$ are those defined by fields $\phi$ 
vanishing at the boundary of $M$, i.e.
 $j_{\partial M}^*\phi=0$. The last condition guarantees that the zero modes satisfy the boundary conditions defined by \eref{bcollar} since ${a}_\phi=0$ and \eref{eq:gaugeinvariance} implies that the boundary conditions \eref{subequations:bconeformsb} become trivial identities.
 
{
Moreover, the gauge field defined by 
$$A':=A-\d\phi$$
satisfies the same boundary conditions as $A$ because of  the identity
\eref{eq:gaugeinvariance}.
}
 
\noindent
Because of this freedom we can chose $\phi$ in such a way that $A'$ is in Coulomb gauge $\d^\ast A'=0$.
This can be achieved for any gauge field $A\in \H^1_{\Lambda^{1}}(\Xi)$ by defining 
\begin{equation}
\phi=-\Delta^{-1}_0 \d^\ast A,
\end{equation} 
where $-\Delta_0$ is the self-adjoint Laplacian operator with Dirichlet boundary conditions densely defined on
$\L^2(\Xi)$.

The Coulomb gauge fixing  condition induces a decomposition of the Hilbert space $\L^2(\Lambda^1(\Xi))$ of  gauge fields  into two orthogonal complements
\begin{equation}
\L^2(\Lambda^1(\Xi))\cap\overline{\{A, \d^\ast A=0\}}\oplus \L^2(\Lambda^1(\Xi)) \cap\overline{\{A=\d\phi;\phi\in\H^2_{\Lambda^{1}}(\Xi), \phi_{_{\partial \Xi}}=0\}}.\label{decomposition}
\end{equation}
Now, for any  $A\in \D_\Xi$ we have
		\begin{equation}
-\Delta_\Xi A=-\Delta_\Xi A'=- \frac1{\sqrt{\det \widetilde{g}}}\partial_r\sqrt{\det \widetilde{g}}\, \partial_r  A' +\d^\ast_\theta \d_\theta A'  , 
	\label{proje}
	\end{equation}
	where $\d_\theta$ and $\d^\ast_\theta$ denote the exterior differential  and its adjoint in the Riemannian submanifolds 
	$r= cte.$ of $\Xi$, and where  we used  the decomposition of the Riemannian metric in terms of the Riemannian normal coordinates on the collar neighborhood:
$$\label{split inverse metric oneform}
	g^{-1}_{_{\Xi}}=
		\begin{pmatrix}
			1 & 0 \\ 0 & \tilde{g}^{-1}(r,\bt ) \\ 
		\end{pmatrix}\;.
$$
Notice that in this case the geodesic vector field ${\cal{N}}$ is given by the tangent vector to the radial geodesics, i.e. ${\cal{L}}_{\cal{N}}=\partial_r$ and the boundary conditions on the fields $A'$ become
{
	\begin{eqnarray}
	&& \partial_r A'(p)=0
 \,, \mathrm{for}  \, p\in\partial\Xi_{-}\: ,\nonumber \\
 && j^\ast \partial_r  A'(p)=K(p)a'(p), \: \partial_r\sqrt{\tilde{g}} A'_r (p) = 0, \mathrm{for}  \, p\in\pM.
	\label{bcollarf}
	\end{eqnarray}}
Thus, we have the following inequalities,	
	\begin{align}
	\!\!\!\!\!\!\!-\scalar{A}{\Delta_\Xi A}_\Xi\!&=\! -\scalar{A'}{\Delta_\Xi A'}_\Xi=-\scalar{A'\!}{  \frac1{\sqrt{\det \widetilde{g}}}\partial_r\sqrt{\det \widetilde{g}}\, \partial_r A'}_\Xi \! + \norm{\d_\theta A'}^2\notag \\ 
			&\!\geq\!\scalar{\partial_r A'}{\partial_r A'}_\Xi\!  - \!\scalar{a'\!}{K a'}_{\partial M} \geq\!\scalar{\partial_r A'}{\partial_r A'}_\Xi\!  - \bar{K}\scalar{a'\!}{a'}_{\partial M},
	\end{align}
where $\bar{K}=\mathrm{sup}\,  | \hspace{-2.3pt} |  K(p)|\hspace{-2.3pt} |$.
Let $\lambda_{\mathrm{max}}(p),\lambda_{\mathrm{min}}(p)$ be the highest and lowest eigenvalues of $g_\Xi^{-1}(p)$, respectively. 
Moreover, since the collar manifold $\Xi$ is compact, there is an infimum and a supremum of $\lambda_{\mathrm{max}}(p),\lambda_{\mathrm{min}}(p)$ in $\Xi$. The same argument holds for the induced metric at the boundary $\tilde{g}(R_0,\bt)$\,. The supremum and infimum of the eigenvalues will be denoted in this case by $\tilde{\lambda}_{\mathrm{max}}$ and $\tilde{\lambda}_{\mathrm{min}}$\,. 
We can now get the bound:
	\begin{align}
		\!\!\!\!\!-\scalar{A}{\Delta_\Xi A}_\Xi&\geq\lambda_{\mathrm{min}}\int_\Xi\sum_{i=1}^{n-1}{\partial_r A'_i}{\partial_r A'_i}\d\mu_g-\bar{K}\tilde{\lambda}_{\mathrm{max}}\int_{\partial M}\sum_{i=1}^{n-1} a'_ia'_i\d\mu_{\tilde{g}}\;\notag\\
		&\!\geq\lambda_{\mathrm{min}}(1-\delta)\!\!\int_{\partial M}\!\sum_{i=1}^{n-1}\Bigl[\int_I{\partial_r A'_i}{\partial_r A'_i}\d r\!-\!\frac{\bar{K}\tilde{\lambda}_{\mathrm{max}}}{(1-\delta)\lambda_{\mathrm{min}}}|a'_i|^2 \Bigr]\d\mu_{\tilde{g}}\;.
	\end{align}
The term in brackets corresponds for each $i=1,\dots,n-1$ to the one-dimensional problem analogous to the one appearing in \eqref{nada} and thus it is bounded from below by the same constant as that of the operator of Eq.\eqref{eq:radial operator}. Hence, 
\begin{align}
	-\scalar{A}{\Delta_\Xi A}_\Xi&\geq-\lambda_{\mathrm{min}}
	\sum_{i=1}^{{n}}\kappa\left( \textstyle{\frac{\bar{K}\tilde{\lambda}_{\mathrm{max}}}{\lambda_{\mathrm{min}}(1-\delta)}} \right)\norm{A'_i}_\Xi^2\notag\\
	&\geq -
	\kappa\left(\textstyle{\frac{\bar{K}\tilde{\lambda}_{\mathrm{max}}}{\lambda_{\mathrm{min}}(1-\delta)}}\right) \norm{A'}^2_\Xi =-C \norm{A'}^2_\Xi\; , \label{eq:finalboundof}
\end{align}
where  
\begin{equation}
C = 
\kappa\left(\textstyle{\frac{\bar{K}\tilde{\lambda}_{\mathrm{max}}}{\lambda_{\mathrm{min}}(1-\delta)}}\right).
\label{lb}
\end{equation}
and we have used that $\norm{A'}_\Xi^2\geq \lambda_{\mathrm{min}}\sum_i\norm{A'_i}^2_\Xi$.
Now, since $\norm{A}^2=\norm{A'}^2+\norm{\d \phi}^2$ and $C>0$ we also have that 
\begin{equation}
  \scalar{A}{-\Delta_\Xi A}_\Xi \geq - C\parallel A\parallel^2_\Xi.\label{gin}
\end{equation}
\medskip
i.e.  $-C$ is a lower bound of the operator $-\Delta_\Xi$
\end{proof2}
 
Due to the orthogonal decomposition \eref{decomposition} the lower bound \eref{lb} also holds for  the restriction $-\Delta$ to  the domain of  gauge fields which satisfy the Coulomb gauge condition
$$\d^\ast A'=0.$$
$-\Delta$ is also essentially self-adjoint in the domain of smooth gauge fields in the Coulomb gauge
which satisfy the Robin boundary conditions \eref{subequations:bconeformsb} that now become
{
\begin{equation}
	  j^\ast {\cal{L}}_{\cal{N}}  A= K j^\ast A \;,\quad
	  \ast_{\pM} j^\ast{\cal{L}}_{\cal{N}}\ast A= 0.
	\end{equation}}

 The only difference is a change in the degeneracy of eigenvalues and in the absence of trivial zero-modes.

The analysis of the Hodge-de Rham  operator $-\Delta_1$ of Feynmann gauge \eref{HdR} 
requires an extra boundary condition of Robin type $\mathrm{i}_{\mathbf{n}} A= \lambda\, d^\ast A$, where $\lambda$
is an arbitrary smooth function of the boundary $\partial M$. 

A similar analysis should lead to the existence of edge states and a lower bound
	\begin{equation}\label{semiboundhodge}
		\scalar{A}{-\Delta^1_M A}\geq-\mu_1^2\norm{A}^2,
	\end{equation}
	where the constant $\mu_1^2$ is also $\lambda$-dependent. An alternative proof could be obtained by using the  Weitzenb\"ock formula and a lower bound of the space curvature.

\section{Edge States for the Dirac Operator with Atiyah-Patodi-Singer boundary conditions} \label{sectionAPS}

The existence of edge states is not an exclusive property of scalar and vector fields. We shall show in this section that 
the Dirac operator with APS boundary conditions also has edge states.

Before we start the proof for the existence of edge states in this case, we will need to introduce some definitions. 
We assume that $M$ is a Riemannian spin manifold with  a given spin  structure. Let $\pi:E\to M$ be the vector bundle over M  associated to the fundamental representation of Spin(n) and $\omega$ the  spin connection
of $E$. The covariant derivative of $\omega$ in $E$ defines a map 
$$\nabla:\Gamma(E)\times\mathfrak{X}(M)\to\Gamma(E)\;,$$
where $\Gamma(E)$ stands for the space of sections of the vector bundle and $\mathfrak{X}(M)$ is the space of vector fields on $M$.
 }
 
In Riemann normal  coordinates, the Dirac operator acting on the restriction of the vector bundle $\pi:E\to M$  to
 the collar neighbourhood $\Xi=\pM\times\{R-\epsilon,R\}$ of $\pM$ \cite{Ba88}  reads
\begin{align}
\Dsl&=-i\slashed{\mathcal{N}}\cdot\nabla_{\mathcal{N}}-i\gamma_{\bt}\cdot\nabla_{\bt}+m\gamma^{d+1}\;,\\
	&=-i\slashed{\mathcal{N}}\nabla_{\mathcal{N}}+A'(m)
\end{align}
where $\slashed{\mathcal{N}}=\gamma^\mu {\mathcal{N}}_\mu$ with ${\mathcal{N}}_\mu$ the components of the vector field  ${\mathcal{N}}$ while $-i\gamma_{\bt}\cdot\nabla_{\bt}$ denotes the tangential component of $\slashed{D}$\,. We use the notation
\begin{equation}
	A'(m)=-i\gamma_{\bt}\nabla_{\bt}+m\gamma^{d+1}\;.
\end{equation}
Let the restriction to the boundary of this operator be
\begin{equation}\label{eq:operatorboundary}
A(m)=A'(m)|_{r=R}=-i\gamma_{\bt}\cdot\nabla_{\bt}+m\gamma^{d+1}\;.
\end{equation}

The matrix $\gamma^{d+1}$ is hermitean, anticommutes with all the $\gamma$'s and $(\gamma^{d+1})^2=\mathbb{I}$. The Dirac operator above is purely spatial and does not involve any time derivative.
$A(m)$ is an essentially self-adjoint operator acting on the Hilbert space of square integrable sections induced at the boundary that we denote by $\H(\pM)$. This is so because we assume that the boundary $\pM$ is a smooth compact manifold without boundary.

Let us define the  Atiyah-Patodi-Singer self-adjoint extensions of the Dirac operator $\Dsl$. Performing integration by parts, we get Green's identity for the Dirac operator, cf. \cite{booss93},
\begin{equation}\label{Green}
	\scalar{\Phi}{\Dsl\Psi}-\scalar{\Dsl\Phi}{\Psi}=i\scalar{\varphi}{\slashed{n}\psi}_{\pM}\;,
\end{equation}
where as usual we denote the restrictions to the boundary with small size Greek letters and  $\slashed{n}=\gamma^\mu n_\mu$ with $n_\mu$ the components of the normal vector field to $\pM$. The boundary conditions must force the R.H.S to vanish identically in order to define a symmetric extension. Let $\mathcal{K}$ be any self-adjoint operator on $\H(\pM)$ such that 
\begin{equation}\label{cond1}
	\mathcal{K}^2>0
\end{equation} 
and
\begin{equation}\label{cond2}
	\slashed{n} \mathcal{K}=-\mathcal{K}\slashed{n}\;.
\end{equation}
Since $\mathcal{K}^2$ is strictly positive by condition \eqref{cond1}, the Hilbert space at the boundary admits an orthogonal decomposition in terms of $\H^{\pm}(\pM)$, the positive and negative eigenspaces of $\mathcal{K}$ respectively.
The second condition, Eq. \eqref{cond2}, now guaranties that $\slashed{n}\H^+(\pM)=\H^-(\pM)$.

The generalised APS boundary conditions consist in requiring that $\Psi|_{\pM}\in\H^-(\pM)$ and therefore in defining the domain as
\begin{equation}\label{APS}
	\Df(\Dsl)=\left\{\Psi\in\H(E)\bigr|\Psi|_{\pM}\in\H^-(\pM)\right\}\;,
\end{equation}
where the Dirac operator is essentially self-adjoint, cf. \cite{At75,booss93}.\\

In order to analyse the existence of edge states we will need to analyse the relation between $\slashed{D}^2$ and another second order differential operator that one can define on the Riemannian vector bundle. The second covariant derivative is defined on a generic tensor field $T$ as 
$$\nabla^2_{X,Y}T:=\nabla_X\nabla_YT-\nabla_{\nabla_XY}T\;.$$

\begin{definition}\label{def:bochner}
	The connection Laplacian is the operator defined by 
	$$ -\Delta:=-\nabla^\mu\nabla_\mu\;.$$
\end{definition}

 There is an important relation between this Laplacian and $\slashed{D}^2$  known as  Lichnerowicz-Weitzenb\"ock identity, cf.\ \cite{lawson89}:
\begin{equation}\label{eq:Lichn-Weitz}
	\slashed{D}^2=-\Delta+R+m^2\;.
\end{equation}	
where $R$ is the Ricci scalar curvature of the Riemannian metric of $M$.

This theorem  defines a formal relation between the operators. Now we want to analyse the existence of edge states for the operator $\slashed{D}$ and this will be equivalent to analysing the existence of edge states for the operator $\slashed{D}^2$\,. But we need to select a domain for the latter that is compatible with the domain of $\slashed{D}$\,. We will define this domain by 
\begin{equation}\label{domaind2}
	\Df(\Dsl^2)=\left\{\Psi\in\Df(\Dsl)\bigr| \Dsl\Psi\in\Df(\Dsl)\right\}\;.
\end{equation}

With this choice, we have that the following identity holds for all $\Psi,\Phi\in\Df(\Dsl^2)$:
$$\scalar{\Phi}{\Dsl^2\Psi}=\scalar{\Dsl\Phi }{\Dsl\Psi}=\scalar{\Dsl^2\Phi}{\Psi}\;.$$
The boundary terms vanish by means of the conditions in \eqref{domaind2}. In fact, a general argument using the spectral theorem shows that $\mathfrak{D}(\slashed{D}^2)$ is an essentially self-adjoint domain for $\slashed{D}$\,.

We will now make a particular choice for $\mathcal{K}$ and hence of the boundary condition. We will show that this particular choice leads to low lying edge states. 

As the operator $\mathcal{K}$ for the generalised boundary conditions, we take the operator 
$$\mathcal{K}(\mu)\equiv i\slashed{n}A(\mu)\;.$$ 
It is clear that it verifies conditions \eqref{cond1} and \eqref{cond2}. If we denote by $P_+$ and $P_-$ the orthogonal projectors onto the boundary subspaces $\H^+(\pM)$ and $\H^-(\pM)$ respectively, the boundary condition on the domain of $\Dsl$ can be expressed as follows:
\begin{equation}\label{APS2}
	\Df(\Dsl):=\left\{ \Psi\in\H(M)\mid P_+\psi=0 \right\}\;.
\end{equation}

Let us write explicitly the boundary condition for $\Dsl^2$. We have that $\Psi\in\Df(\Dsl^2)$ implies that $\Dsl\Psi\in\Df(\Dsl)$\,.

Now
\begin{equation}
	\Dsl\Psi=-i\slashed{n}\nabla_{\mathcal{N}}\Psi+A'(m)\Psi
\end{equation}
\begin{equation}	
	(\Dsl\Psi)|_{\pM}=-i\slashed{n}\dot{\psi}+A(m)\psi\;,
\end{equation}
where $\dot{\psi}\equiv(\nabla_{\mathcal{N}}\Psi)|_{\partial M}$ and $\psi=\Psi|_{\pM}$\,. According to \eqref{domaind2} and \eqref{APS2} the domain of $\Dsl^2$ consists of the functions that fulfil $P_+\psi=0$ and $P_+(\Dsl\Psi)|_{\pM}=0$, which according to the identities above can be written as
\begin{align}
	&\phantom{aaaaaa}P_+\psi=0\notag\\
	&P_+\left[-i\slashed{n}\dot{\psi}+A(m)\psi\right]=0\;.\label{eq:boundaryDelta}
\end{align}

Recall that $P_+$ is the projection onto the positive (or more generally non-negative) space $\H^+(\pM)$ of the operator $\mathcal{K}(\mu)$\,. Our aim is to show that there exist low lying edge states. In the spirit of the proof on the existence of edge states for the Laplace operator given in \cite{As05}, we are going to select an edge state satisfying the boundary condition above. In order to do so, we need to analyse first the relation between the operators $\mathcal{K}(\mu)$ and $A(m)$\,. First notice that 
\begin{equation}
	A(m)=A(\mu)+(m-\mu)\gamma^{d+1}\;,
\end{equation}
\begin{equation}
	A(\mu)=(\slashed{n})^2A(\mu)=-i\slashed{n}\mathcal{K}(\mu)\;.
\end{equation}
Hence the boundary condition \eqref{eq:boundaryDelta} can be rewritten as
\begin{equation}\label{eq:BC}
	P_+\left[ -i\slashed{n}\dot{\psi}-i\slashed{n}\mathcal{K}(\mu)\psi+(m-\mu)\gamma^{d+1}\psi \right]=0\;.
\end{equation}
According to this, the most simple case occurs for $m=\mu$\,. Therefore, for simplicity, consider from now on that $\mu=m$ and that the boundary condition is
\begin{align}
	&\phantom{aaa}P_+\mathcal{K}(m)\psi=0\notag\\
	&P_+\left[ -i\slashed{n}\dot{\psi}-i\slashed{n}\mathcal{K}(m)\psi\right]=0\;.\label{eq:BC2}
\end{align}
Moreover, there is a direct relation between the square of the operator $\mathcal{K}(m)$ and the connection Laplacian on the boundary of the manifold given by the Lichnerowicz-Weitzenb\"ock identity, see Definition~\ref{def:bochner} and Eq.~\eqref{eq:Lichn-Weitz}:
\begin{equation}\label{eq:K2}
	\mathcal{K}(m)^2=i\slashed{n}A(m)i\slashed{n}A(m)=(\slashed{n})^2 A(m)^2=A(m)^2=-\Delta_{\bt}+m^2+{R}_{\pM}\;.
\end{equation}
Now ${R}_{\pM}$ stands for the scalar curvature of the induced Riemannian connection at the boundary.

In contrast to the situations on Sections~\ref{sectionRBC} and \ref{sectionEM}, the operator $\Dsl^2$ is automatically positive and thus there is no need to prove its semiboundedness. But we will show that there are edge states, whose eigenvalues are close to zero and therefore below the mass gap, which is not obvious. We will follow the approach taken in Section~\ref{sectionedge} and find an upper bound for the energy of certain states that are strongly localised at the boundary.

First we select an eigenstate $\xi\in\H_-$ of $\mathcal{K}(m)$ with eigenvalue $-\Lambda$, with $\Lambda>0$\,,
\begin{equation}\label{eq:eigenvalueK}
	\mathcal{K}(m)\xi=-\Lambda\xi\;.
\end{equation}
For this state we can prove the following bound.

\begin{lemma}\label{lem:boundxi}
	Let $\xi$ be an eigenstate of $\mathcal{K}(m)$ with eigenvalue $-\Lambda$ and let $-\Delta_{\bt}$ be the boundary Laplacian of the spin vector bundle induced at the boundary $\pM$\,. Then
	$$\scalar{\xi}{-\Delta_{\bt}\xi}\leq (\Lambda^2-m^2)\norm{\xi}^2+\sigma_{\mathrm{max}}\norm{\xi}^2\;,$$
where $\sigma_{\mathrm{max}}:=\sup_{p\in \pM}|{R}_{\pM}(p)|$\,.
\end{lemma}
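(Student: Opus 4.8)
The plan is to exploit the identity \eqref{eq:K2}, which expresses the square of the boundary operator $\mathcal{K}(m)$ directly in terms of the boundary Laplacian, the mass and the boundary scalar curvature:
$$\mathcal{K}(m)^2=-\Delta_{\bt}+m^2+{R}_{\pM}\;.$$
Since $\xi$ is an eigenstate of $\mathcal{K}(m)$ with eigenvalue $-\Lambda$ by \eqref{eq:eigenvalueK}, it is automatically an eigenstate of $\mathcal{K}(m)^2$ with eigenvalue $\Lambda^2$. First I would compute $\scalar{\xi}{\mathcal{K}(m)^2\xi}$ in two ways: on one hand it equals $\Lambda^2\norm{\xi}^2$, and on the other hand, substituting \eqref{eq:K2}, it equals $\scalar{\xi}{-\Delta_{\bt}\xi}+m^2\norm{\xi}^2+\scalar{\xi}{{R}_{\pM}\xi}$. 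Equating these yields
$$\scalar{\xi}{-\Delta_{\bt}\xi}=(\Lambda^2-m^2)\norm{\xi}^2-\scalar{\xi}{{R}_{\pM}\xi}\;.$$

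The remaining step is to bound the curvature term $-\scalar{\xi}{{R}_{\pM}\xi}$ from above. Since ${R}_{\pM}$ acts as multiplication by the (real) scalar function $p\mapsto {R}_{\pM}(p)$ on $\pM$, I would estimate pointwise using $|{R}_{\pM}(p)|\leq\sigma_{\mathrm{max}}$, giving $-\scalar{\xi}{{R}_{\pM}\xi}\leq\sigma_{\mathrm{max}}\norm{\xi}^2$. Because the boundary $\pM$ is a smooth compact manifold without boundary, as noted in the discussion following \eqref{eq:operatorboundary}, the supremum $\sigma_{\mathrm{max}}=\sup_{p\in\pM}|{R}_{\pM}(p)|$ is finite and attained, so this bound is legitimate. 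Combining the equality with this estimate gives exactly the claimed inequality
$$\scalar{\xi}{-\Delta_{\bt}\xi}\leq(\Lambda^2-m^2)\norm{\xi}^2+\sigma_{\mathrm{max}}\norm{\xi}^2\;.$$

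I do not anticipate a genuine obstacle here: the lemma is essentially an algebraic rearrangement of \eqref{eq:K2} combined with a trivial pointwise curvature estimate, so there is no hard analytic step. The only points requiring mild care are, first, verifying that $\xi$ indeed lies in the domain of $-\Delta_{\bt}$ so that the pairing $\scalar{\xi}{-\Delta_{\bt}\xi}$ is well-defined — this follows since $\xi$ is an eigenstate of $\mathcal{K}(m)$ and hence of the second-order operator $\mathcal{K}(m)^2$, whose domain is contained in that of $-\Delta_{\bt}$ by \eqref{eq:K2}; and second, ensuring that the curvature term enters with the correct sign, which is why the worst case $-{R}_{\pM}(p)\leq\sigma_{\mathrm{max}}$ rather than a signed bound is used. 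The strict positivity and the compactness assumptions guarantee that all quantities are finite and the manipulation is rigorous.
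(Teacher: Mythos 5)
Your proposal is correct and follows essentially the same route as the paper: the paper likewise substitutes the Lichnerowicz--Weitzenb\"ock identity \eqref{eq:K2} into $\scalar{\xi}{\mathcal{K}(m)^2\xi}=\Lambda^2\norm{\xi}^2$, rearranges to isolate $\scalar{\xi}{-\Delta_{\bt}\xi}$, and bounds the curvature term by $\sigma_{\mathrm{max}}\norm{\xi}^2$. Your added remarks on the domain of $-\Delta_{\bt}$ and the compactness of $\pM$ are sound but not part of the paper's (shorter) argument.
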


\begin{proof}

From \eqref{eq:K2} and condition \eqref{eq:eigenvalueK}, we have that 
\begin{align}
	\scalar{\xi}{(-\Delta_{\bt}+m^2+\mathcal{R}_{\pM})\xi}_{\pM}&=\Lambda^2\|\xi\|^2\notag\\
	\Leftrightarrow\quad \scalar{\xi}{-\Delta_{\bt}\xi}_{\pM}&=(\Lambda^2-m^2)\|\xi\|^2-\scalar{\xi}{{R}_{\pM}\xi}_{\pM}\notag\\
	\Rightarrow\quad \scalar{\xi}{-\Delta_{\bt}\xi}_{\pM}&\leq(\Lambda^2-m^2)\|\xi\|^2+\sigma_{\mathrm{max}}\|\xi\|^2\;,
\label{boundboundarysigma}
\end{align}\\

\end{proof}

Notice that we can choose $\Lambda$ such that $\Lambda^2-m^2+\sigma_{\mathrm{max}}$ is the closest element to the smallest eigenvalue of $-\Delta_{\bt}$. 
It is well known that the eigenvalues of the connection Laplacian, cf.\ Definition~\ref{def:bochner}, on a manifold without boundary decrease with increasing volume of the manifold. As an example consider the situation for the sphere $S^n$ with radius $\rho$. The eigenvalues of the Dirac operator are given by $\lambda_l= \pm(l+n/2)/\rho$ , cf.\ \cite{camporesi96}. Using the Lichnerowicz-Weitzenb\"ock identity and the fact that in this case the scalar curvature is  $R = n(n-1)/(4\rho^2)$ one gets that the lowest eigenvalue for the connection Laplacian is $n/(4\rho^2)$. Hence, by increasing the volume of the manifold it is possible to make the smallest eigenvalue different from zero  arbitrarily small.\\

Now we will show (see Theorem~\ref{thm:edgedirac}) that if $\Lambda^2-m^2+\sigma_{\mathrm{max}}$ is small enough and $m$ is large enough, for a given width $\epsilon$ of the collar neighbourhood $\Xi$, there will appear negative energy states for $-\Delta$, which means that there will appear edge states for $\Dsl^2$ with energies close to zero. Since 
$$\scalar{\Dsl\Psi}{\Dsl\Psi}=\scalar{\Psi}{\Dsl^2\Psi}$$
for $\Psi$ fulfilling \eqref{domaind2} there are corresponding low-lying edge states. In practice these conditions will be met, as stated in the previous paragraph, for large enough manifolds.

Consider the following vector state with $\xi$ taken as in Lemma~\ref{lem:boundxi}:
\begin{equation}\label{edge state}
	\Psi=\exp\left( k \tan (\frac{\pi}{2\epsilon}(R_0-r)) \right)\xi(\bt)\;,
\end{equation}
for $R_0- \epsilon \leq  r \leq R_0$ and $\Psi \equiv 0$  if $ r \leq R_0 -\epsilon$.
Since $\Psi|_{r=R_0}=\psi=\xi$\,, it is clear that $\Psi\in\Df(\Dsl)$\,. Now we need it to be in the domain of $\Dsl^2$ and therefore we need to select $k$ in order that the boundary condition \eqref{eq:BC2} is satisfied. It is enough that 
\begin{equation}
	\dot{\psi}=-\mathcal{K}(m)\psi\;.
\end{equation}
A short calculation shows that for $\Psi$ as in Eq.~\eqref{edge state}, the left hand side is $\dot{\psi}=-\frac{k\pi}{2\epsilon}\xi$ while, according to \eqref{eq:eigenvalueK}, the right hand side verifies $-\mathcal{K}(m)\psi=\Lambda\xi$\,. Hence $\Psi\in\Df(\Dsl^2)$ if $$k=-\frac{2\epsilon\Lambda}{\pi}\,.$$ Moreover, under the latter condition the state $\Psi$ satisfies 
\begin{equation}\label{eq:robin dirac}
\dot{\psi}=\Lambda\psi\;.
\end{equation}

Let us now compute the mean value of $\Dsl^2$ in this vector state. We can assume that on the collar neighbourhood, we work on a frame of the vector bundle $E$ such that $\nabla_{\mathcal{N}}\Psi=\partial_r\Psi$. Then, 
\begin{align}
\scalar{\Psi}{-\Delta\Psi}&=\scalar{\nabla\Psi}{\nabla\Psi}-\Lambda\scalar{\xi}{\xi}_{\pM}=\int_{R_0-\epsilon}^{R_0}\int_{\pM}\d \mu_g \left|\frac{\partial \Psi}{\partial r}\right|^2\notag\\
	&\phantom{aaaa}+\int_{R_0-\epsilon}^{R_0}\int_{\pM}\d\mu_{g} \exp\left(2k\tan\frac{\pi}{2\epsilon}(R_0-r)\right)\xi^\dagger(-\Delta_{\bt}\xi)-\Lambda||\xi||^2_{\pM\notag}\\
	&\leq (1+\delta)||\xi||^2_{\pM}\biggl[ (\Lambda^2-m^2+\sigma_{\mathrm{max}})\int_{R_0-\epsilon}^{R_0}\exp\left(2k\tan\frac{\pi}{2\epsilon}(R_0-r)\right)\d r\notag\\
	 &\phantom{aaaa}+\int_{R_0-\epsilon}^{R_0}\Lambda^2(1+\tan^2(\frac{\pi}{2 \epsilon}(R_0-r)))^2\exp\left(2k\tan\frac{\pi}{2\epsilon}(R_0-r)\right)\d r-\frac{\Lambda}{1+\delta}\biggr]
	 \notag\\
	&\leq(1+\delta)||\xi||^2_{\pM}\left[ \Lambda\left(\frac{1}{2}+\frac{\pi^2}{16\epsilon^2\Lambda^2}-\frac{1}{1+\delta}\right)+(\Lambda^2-m^2+\sigma_{\mathrm{max}})\epsilon \right].\notag\label{eq:inequality}
\end{align}
Here we have used the explicit form of the function $\Psi$, Eq.~\eqref{edge state}, Eq.~\eqref{eq:robin dirac}, Lemma~\ref{lem:boundxi}, the identity 
\begin{equation}
	\int_0^{\pi/2}\d t (1+\tan^2 t)^2 \exp(-2k\tan t)=\frac{1}{2k}+\frac{1}{4k^3},
\end{equation}
and $\delta$ is defined in Eq.~\eqref{compact omega}. Thus we have proved the following:

\begin{theorem}\label{thm:edgedirac}
	Let  $-\Delta$ be the connection Laplacian, see Definition~\ref{def:bochner}, defined on the Hermitean bundle $\pi:E\to M$ and let $\Psi\in\Df(\Dsl^2)$ be a state of the form shown in Eq.~\eqref{edge state}, with $\xi(\bt)$ satisfying Eq.~\eqref{eq:eigenvalueK} and where $k=\textstyle{-\frac{2\epsilon\Lambda}{\pi}}$. Then the following bound holds.
	$$\scalar{\Psi}{-\Delta\Psi}\leq(1+\delta)||\xi||^2_{\pM}\left[ \Lambda\left(\frac{1}{2}+\frac{\pi^2}{16\epsilon^2\Lambda^2}-\frac{1}{1+\delta}\right)+(\Lambda^2-m^2+\sigma_{\mathrm{max}})\epsilon \right]\;.$$
\end{theorem}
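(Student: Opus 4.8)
The plan is to evaluate the quadratic form $\scalar{\Psi}{-\Delta\Psi}$ directly on the explicit trial state \eqref{edge state}, for which membership in $\Df(\Dsl^2)$ has already been secured by the choice $k=-\frac{2\epsilon\Lambda}{\pi}$, and to read off the bound term by term. First I would integrate by parts using Green's identity for the connection Laplacian, $\scalar{\Psi}{-\Delta\Psi}=\scalar{\nabla\Psi}{\nabla\Psi}-\scalar{\psi}{\dot\psi}_{\pM}$. The trial state and all its derivatives vanish at the inner boundary $r=R_0-\epsilon$, since there $\tan(\frac{\pi}{2\epsilon}(R_0-r))\to+\infty$ and $k<0$, so the exponential suppresses everything; hence only the physical boundary $\pM$ contributes. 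The boundary condition \eqref{eq:robin dirac}, namely $\dot\psi=\Lambda\psi$, then reduces the surface integral to the single clean term $-\Lambda\norm{\xi}^2_{\pM}$.

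Next I would split the Dirichlet energy into radial and tangential parts using the block-diagonal collar metric \eqref{split metric S2}. Writing $\Psi=f(r)\,\xi(\bt)$ with $f(r)=\exp(k\tan(\frac{\pi}{2\epsilon}(R_0-r)))$, the tangential part integrates by parts over the closed manifold $\pM$ to give a weighted $\scalar{\xi}{-\Delta_{\bt}\xi}_{\pM}$, which I would control by Lemma~\ref{lem:boundxi} through the bound $(\Lambda^2-m^2+\sigma_{\mathrm{max}})\norm{\xi}^2$; since $f(r)^2=\exp(2k\tan(\cdots))\le 1$ on the collar and the radial interval has length $\epsilon$, the weight integral is at most $\epsilon$, producing the contribution $(\Lambda^2-m^2+\sigma_{\mathrm{max}})\,\epsilon\,\norm{\xi}^2_{\pM}$. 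For the radial part I would compute $\partial_r\Psi=-\frac{k\pi}{2\epsilon}(1+\tan^2 u)\,f(r)\,\xi$ with $u=\frac{\pi}{2\epsilon}(R_0-r)$, substitute $t=u$, and apply the stated identity $\int_0^{\pi/2}(1+\tan^2 t)^2\exp(-2|k|\tan t)\,\d t=\frac{1}{2|k|}+\frac{1}{4|k|^3}$. Inserting $|k|=\frac{2\epsilon\Lambda}{\pi}$ collapses the radial integral to exactly $\Lambda(\frac12+\frac{\pi^2}{16\epsilon^2\Lambda^2})\norm{\xi}^2_{\pM}$, which is the first bracketed term.

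The delicate point, and the main obstacle, is the passage from integrals over the collar $\Xi$ to the boundary norm $\norm{\xi}^2_{\pM}$: the volume element and the tangential metric both depend on $r$, whereas Lemma~\ref{lem:boundxi} and the surface term are phrased at $r=R_0$. I would invoke the collar estimate \eqref{compact omega}, valid for a sufficiently thin collar, to bound $\sqrt{|\tilde g(r,\bt)|}\le(1+\delta)\sqrt{|\tilde g(R_0,\bt)|}$ and likewise to control the tangential inverse metric through its eigenvalues, thereby replacing every $r$-dependent weight by its boundary value at the uniform cost of a factor $(1+\delta)$. The one term exempt from this factor is the surface contribution $-\Lambda\norm{\xi}^2_{\pM}$, which already lives exactly at $r=R_0$; pulling $(1+\delta)$ outside the radial and tangential pieces while leaving this term untouched is precisely what generates the $-\frac{1}{1+\delta}$ inside the bracket. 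Assembling the three contributions then yields the asserted inequality. Beyond this bookkeeping the argument is routine, resting only on the one-dimensional integral identity and on Lemma~\ref{lem:boundxi}.
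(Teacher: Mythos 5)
Your proposal is correct and takes essentially the same route as the paper's own proof: Green's identity combined with $\dot{\psi}=\Lambda\psi$ to produce the surface term $-\Lambda\norm{\xi}^2_{\pM}$, the radial/tangential splitting of the Dirichlet energy in the collar, Lemma~\ref{lem:boundxi} with the weight bound $\exp(2k\tan(\cdot))\leq 1$ for the tangential part, the stated integral identity with $|k|=\frac{2\epsilon\Lambda}{\pi}$ for the radial part, and the collar estimate \eqref{compact omega} producing the overall $(1+\delta)$ factor and the $-\frac{1}{1+\delta}$ inside the bracket. If anything, your explicit use of $|k|$ (the paper's identity tacitly assumes $k>0$ while $k$ is negative) and your remark that the $r$-dependence of the tangential inverse metric must also be absorbed into the $(1+\delta)$ cost are slightly more careful than the paper's presentation.
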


Hence, as stated above, $\Lambda^2$ can be chosen of the order of $m^2-\sigma_{\mathrm{max}}$ when the mass is large enough and if the lowest eigenvalue of $-\Delta_{\bt}$ is small enough. In this case, the first term in the brackets of the last inequality above becomes negative, showing that $-\Delta$ has negative eigenvalues. Moreover, for large $\Lambda$ they become edge localised as follows from  the boundary condition $\dot{\psi}=\Lambda\psi$\,. In general, states with higher angular momentum might have energies greater than $m^2$ even though they are edge localised.

Bulk states can be defined as elements of $\Gamma_c(E)$, i.e., sections with compact support in the interior of the base manifold. Because for these sections $\Psi|_{\pM}=0$\,, one has that 
$$\scalar{\Psi}{-\Delta\Psi}\geq0\;,$$
and therefore $$ \scalar{\Psi}{\Dsl^2\Psi}\geq m^2\norm{\Psi}^2+\scalar{\Psi}{R\Psi}\;.$$

In contrast to the situation for scalar fields, there might be no eigenvalues of $\slashed{D}$ close to zero if the smallest eigenvalue of the boundary Laplacian $-\Delta_{\bt}$ is not small enough. Since this value depends on the volume of the boundary manifold, there will be a threshold value for the volume, which can be determined using Weyl's law, cf.\ \cite{lieb97}, such that edge states always exist. 

The difficulty of finding a negative upper bound for the expectation value of $-\Delta$ in the state \eref{edge state} suggests the existence of a critical size threshold 
for the existence of edge states. We shall show in the next section that this occurs in some cases. However, to get a general result we will need a general lower bound on  $-\Delta$ with the required transition properties.  

On the other hand, from the phenomenology of  macroscopic devices  the existence of such a threshold for edge states is expected. For instance, for small enough samples of topological insulators the effects of low-lying edge states are not present and a threshold for their appearance is detected experimentally \cite{2D}.

\section{Edge states  of Dirac operator with chiral bag boundary conditions} \label{sectionchiral}

In addition to the former, there are other families of boundary conditions that one can consider, for instance, the boundary conditions used in the analysis of quark confinement \cite{mit} or their generalizations like the chiral bag boundary conditions, cf.~\cite{chiral}. These can be introduced using a general approach that leads to self-adjoint Dirac Hamiltonians.

If the boundary term \eqref{Green} is written as the difference of the two chiral components  $\Psi_\pm=\frac12(1\pm\nsl)\Psi$  of spinors, $\Psi=\Psi_+ +\Psi_-$\,, then we get
\begin{equation}
	\Sigma(\Psi,\Phi)=i \scalarb{ \Psi_+}{ \Phi_+}- i\scalarb{\Psi_-}{ \Phi_-}.
\end{equation}
It is easy to check that boundary conditions of the form
\begin{equation}\label{cdos}
	(1-\nsl)\psi=U \gamma^{d+1} (1+\nsl)\psi\;,
\end{equation}
 where $U$ is a unitary operator on the boundary Hilbert space of spinors commuting with $\nsl$\,, make the boundary term vanish. These boundary conditions define domains of essential self-adjointness provided that certain regularity conditions on the unitary operator $U$ are satisfied, cf.\ \cite{As05,bruning08,ibort13}.
  
 We will assume ``local'' boundary conditions imposing that $U$ is a finite dimensional matrix acting only on spinor indices. Using the Cayley transform, we can express \eqref{cdos} as
\begin{equation}\label{dbc}
	\nsl \psi= \frac{1-U \gamma^{d+1}}{I+U\gamma^{d+1}} \psi\;.
\end{equation}
As a particular case, we can consider $U$ of the form
\begin{equation}\label{ubc}
	U=e^{2 i\arctan e^{\alpha}},
\end{equation}
which because of the identities
\begin{eqnarray*}\label{ibc}
	\!\!\!\!\frac{1-U \gamma^{d+1}}{I+U\gamma^{d+1}}\!\!&=&\!\!\frac{1+U }{I-U}(1-\gamma^{d+1})+\frac{1-U }{I+U}(1+\gamma^{d+1})\\
	\!\!&=&\!\!i\cot(\arctan e^{\alpha}) (1-\gamma^{d+1})-i \tan(\arctan e^{\alpha}) (1+\gamma^{d+1})\\
	\!\! &=&\!\!i e^{-\alpha} (1-\gamma^{d+1})-i e^{\alpha} (1+\gamma^{d+1})= -i e^{\alpha\gamma^{d+1}}\gamma^{d+1},
\end{eqnarray*}
corresponds  to the chiral bag boundary conditions \cite{chiral,jenalaplata}:
\begin{equation}\label{cbc}
	\frac12\left(1-i \gamma^{d+1}e^{-\alpha\gamma^{d+1}}\nsl\right)  \psi=0\;.
\end{equation}

 \subsection*{Example: The  3D-Ball $B_3$}

Let us consider a Dirac electron moving on a 3D ball $B_3$ of radius $R_0$. 
The Dirac Hamiltonian 
\begin{equation}
{
	H=-i\vec\gamma\cdot \vec{\nabla}+ \gamma_{0}m \;}
\end{equation}
 is subject to the boundary condition \eqref{ibc}:
\begin{equation}
	\nsl \Psi(R_0,\theta,\varphi)= -i  {\rm e}^{\alpha \gamma_0} \gamma_0
		 \Psi(R_0,\theta,\varphi)
	\label{rbcD}
\end{equation}
with $\nsl=\gamma_1 \sin\theta\cos\varphi+\gamma_2 \sin\theta\sin\varphi+ \gamma_3 \cos\theta$. 

The Hamiltonian  $H$ is essentially self-adjoint in the space of smooth functions satisfying (\ref{ibc}).
Let us consider the spectrum of  $H$. It is invariant under rotations
with generators given by the total angular momentum ${{J^i}}={{L^i}}+S^i$ where $S^i=-\frac{i}4\epsilon^{ijk}{\{\gamma_j,\gamma_k\}}$ (i=1,2,3). Let us
consider stationary states of the form (see e.g. \cite{Greiner,Strange,Thaller})
\begin{equation}\label{ansatz}
\Psi_{jm}(r ,\theta,\varphi)=		\begin{pmatrix}
			  \Omega_{j,m,\kappa}(\theta,\varphi) \phi_1(r)  \\ i\,  \Omega_{j,m,-\kappa}(\theta,\varphi) \phi_2(r) \\ 
		\end{pmatrix}
\end{equation}
where, for $\kappa>0$\,, 
$$ \Omega_{j,m,\kappa}(\theta,\varphi)=\frac1{\sqrt{2j}}\begin{pmatrix}\sqrt{j+m}\  Y_{j-\frac12}^{m_j-\frac12} \\
\sqrt{j-m} \ Y_{j-\frac12}^{m_j+\frac12}
\end{pmatrix}
$$ 
and 
$$ \Omega_{j,m,-\kappa}(\theta,\varphi)=\frac1{\sqrt{2j+2}}\begin{pmatrix}\sqrt{j-m+1}\  Y_{j+\frac12}^{m_j-\frac12} \\
-\sqrt{j+m+1} \ Y_{j+\frac12}^{m_j+\frac12}
\end{pmatrix}
$$ 
are the bispinor eigenfunctions of the three commuting operators $J^2, J_3, K=\gamma_0(2 S^i L_i+1)$
associated to angular momenta,
 \begin{eqnarray*}
J^2\Psi_{jm}(r ,\theta,\varphi)&=	&j(j+1)\, \Psi_{jm}(r ,\theta,\varphi)\\
J_3\Psi_{jm}(r ,\theta,\varphi)&=&	 m \, \Psi_{jm}(r ,\theta,\varphi)\\
K \Psi_{jm}(r ,\theta,\varphi)&=	&\kappa\,  \Psi_{jm}(r ,\theta,\varphi)= \left(j+{\textstyle{\frac12}}\right) \Psi_{jm}(r ,\theta,\varphi).\\
 \end{eqnarray*}
 Now, since
 \begin{equation}
-i\vec\gamma\cdot \vec{\nabla} \Psi_{jm}(r ,\theta,\varphi)= -i\vec\gamma\cdot \vec{n}\left(\partial_r +\frac1{r}( 1-\gamma_0 K)
\right) \Psi_{jm}(r ,\theta,\varphi),
\end{equation}
the solutions of the Dirac equation
 $$
 H \Psi_{jm}(r ,\theta,\varphi)= E \Psi_{jm}(r ,\theta,\varphi)
 $$
 must satisfy the coupled equations:
\begin{equation}\label{eq1}
\left(\partial_r +\frac{1-\kappa}{r}\right)\phi_1(r)- (E+m) \phi_2(r)=0,
\end{equation}
\begin{equation} \label{eq2}
			   \left(-\partial_r -\frac{1+\kappa}{r}\right) \phi_2(r)+(-E+m)\phi_1(r)=0,  
\end{equation}
which can be decoupled into a pair of second order differential equations:
\begin{equation}\label{eqq1}
			\left(\partial_r^2+\frac2{r} \partial_r - \frac{\kappa(\kappa-1)}{r^2}\right)\phi_1(r)= (m^2-E^2)\phi_1(r),
\end{equation}
\begin{equation}
			\left(\partial_r^2+\frac2{r} \partial_r - \frac{\kappa(\kappa+1)}{r^2}\right)\phi_2(r)= (m^2-E^2)\phi_2(r).
\end{equation}

Since $\{\gamma\cdot \vec{n},K\}=0$ and
 \begin{equation} \label{eqq2}
\vec\gamma\cdot \vec{n}\, \Psi_{jm}(r ,\theta,\varphi)= \begin{pmatrix}
			-i\, \Omega_{j,m,\kappa}(\theta,\varphi) \phi_2(r)  \\  - \Omega_{j,m,-\kappa}(\theta,\varphi) \phi_1(r) \\ 
		\end{pmatrix} ,
\end{equation}
the boundary conditions (\ref{rbcD}) imply that
\begin{equation}\label{rbc2}
			\phi_2(R_0)={\rm e}^\alpha \phi_1(R_0).
\end{equation}

The solutions of \eref{eqq1} and \eref{eqq2} with smooth behaviour at $r=0$ are given in terms of the 
Bessel functions \cite{Abramowitz, nistdl},

\begin{figure}[h]
 \centerline{  \includegraphics[height=8.5cm]{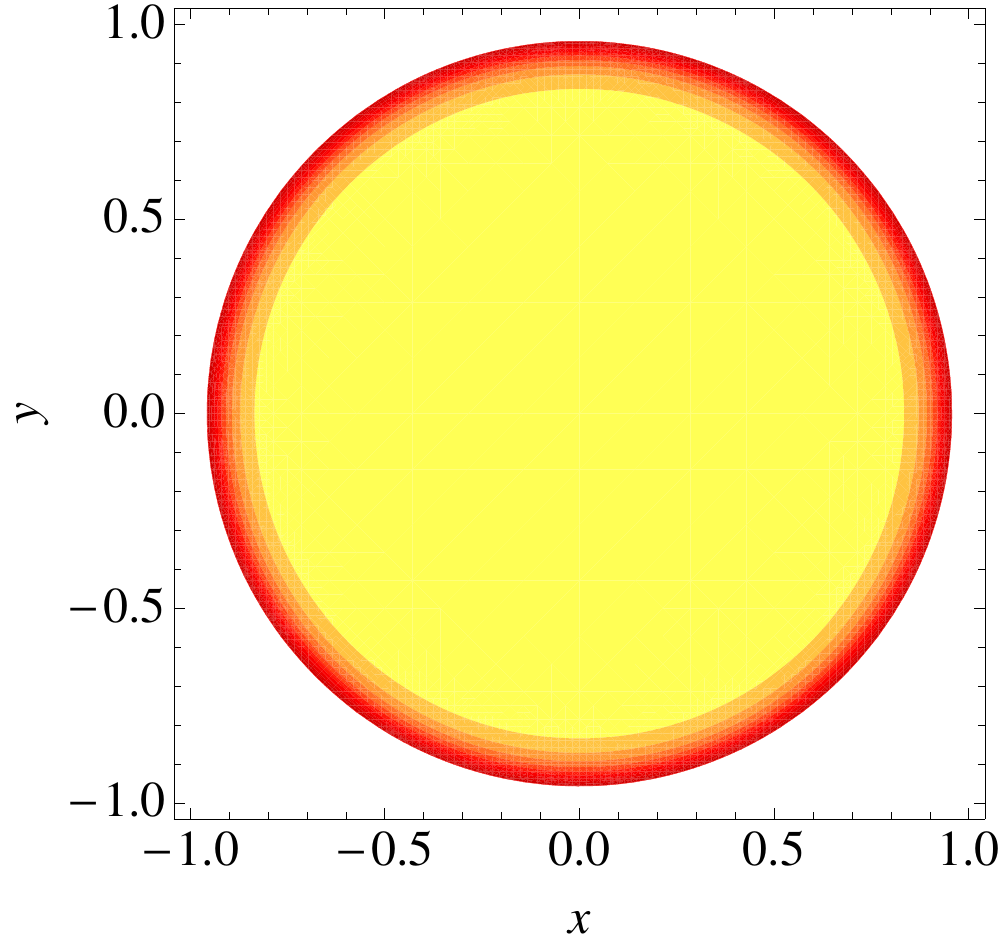}}
 \caption{{Charge density distribution $\Psi^\dagger \Psi$ of edge state with zero energy on a 3D ball $B_3$ for $\alpha=1.16144$ and $mR_0=1$.}}
  \end{figure}

\begin{equation} \label{hspin}
\phi_1(r)= \frac{\sqrt{m+E}}{\sqrt{r}}I_{\kappa-\frac12}\left(r \sqrt{m^2-E^2}\right),\ 	\phi_2(r)= \frac{\sqrt{m-E}}{\sqrt{r}} I_{\kappa+\frac12}\left(r \sqrt{m^2-E^2}\right), 
\end{equation}
where the normalization was chosen to match the Dirac equations \eref{eq1} and \eref{eq2}. Now, the boundary conditions \eref{rbcD}
imply that 
\begin{equation}
\label{spectrum}
			\sqrt{\frac{m+E}{m-E}}\frac{I_{\kappa-\frac12}\left(R_0 \sqrt{m^2-E^2}\right) }{I_{\kappa+\frac12}\left(R_0 \sqrt{m^2-E^2}\right) }= {\rm e}^{-\alpha},
\end{equation}
which is the spectral condition. The edge states correspond to   states with negative kinetic energies ($T=|E|-m$), i.e. $-m<E<m$ and
the simplest one  corresponds to $j=1/2$ and $\kappa=1$, i.e.
\begin{equation}
\phi_1(r)= \frac{\sqrt{m+E}}{\sqrt{r}}I_{\frac12}\left(r \sqrt{m^2-E^2}\right),\ 	\phi_2(r)= \frac{\sqrt{m-E}}{\sqrt{r}} I_{\frac32}\left(r \sqrt{m^2-E^2}\right). 
\end{equation}
These states always exist provided that $\alpha>0$.
 
Notice that in the limit $\alpha\to \infty$, (\ref{rbc2}) leads to Dirichlet boundary conditions for $\phi_1$, i.e. $\phi_1(R_0)=0$  whereas  in the limit $\alpha\to -\infty$ we obtain Dirichlet boundary condition for $\phi_2$, i.e $\phi_2(R_0)=0$ and because of
\eqref{eq2}, Neumann  boundary condition for $\phi_1$ when $j=\frac12$, i.e. $\phi'_1(R_0)=0$.

In particular, we have a zero mode for $\alpha = \log I_\frac32(R_0 m) -  \log I_\frac12(R_0 m) $ which corresponds to  the maximally localised edge state.
The concentration of the  state  on the edge increases as the mass gap increases, which provides the perfect situation for a topological insulator.

  One can find more states with higher angular momenta and negative kinetic energies.
  And whenever  $\alpha = \log I_{j+\frac12}(R_0 m) -  \log I_{j-\frac12}(R_0 m)$\,, we have  zero modes.

Notice that because the boundary condition \eref{rbcD} preserves rotational invariance, the energy spectrum is degenerate. For angular momentum $j$\,,
there are $2j+1$ degenerate states. There is an additional symmetry of Dirac equation in $\mathbb{R}^3$. If we interchange the two bi-spinors in the ansatz 
\begin{equation}\label{ansatz2}
\Psi_{jm}(r ,\theta,\varphi)=		\begin{pmatrix}
		\Omega_{j,m,-\kappa}(\theta,\varphi) \phi_1(r)	 \\ i\,  \Omega_{j,m,\kappa}(\theta,\varphi) \phi_2(r) \\ 
		\end{pmatrix}\;,
\end{equation}
it is possible to find solutions with the same energy. However, in the ball $B_3$ the boundary condition breaks this symmetry, and in particular  
there are no edge states of the form \eref{ansatz2} when $\alpha<0$\,.
This is in agreement with the experimental result that for small enough samples of topological insulators the effects of low-lying edge states are not present \cite{2D}.

    \begin{figure}[h]
 \centerline{  \includegraphics[height=5.5cm]{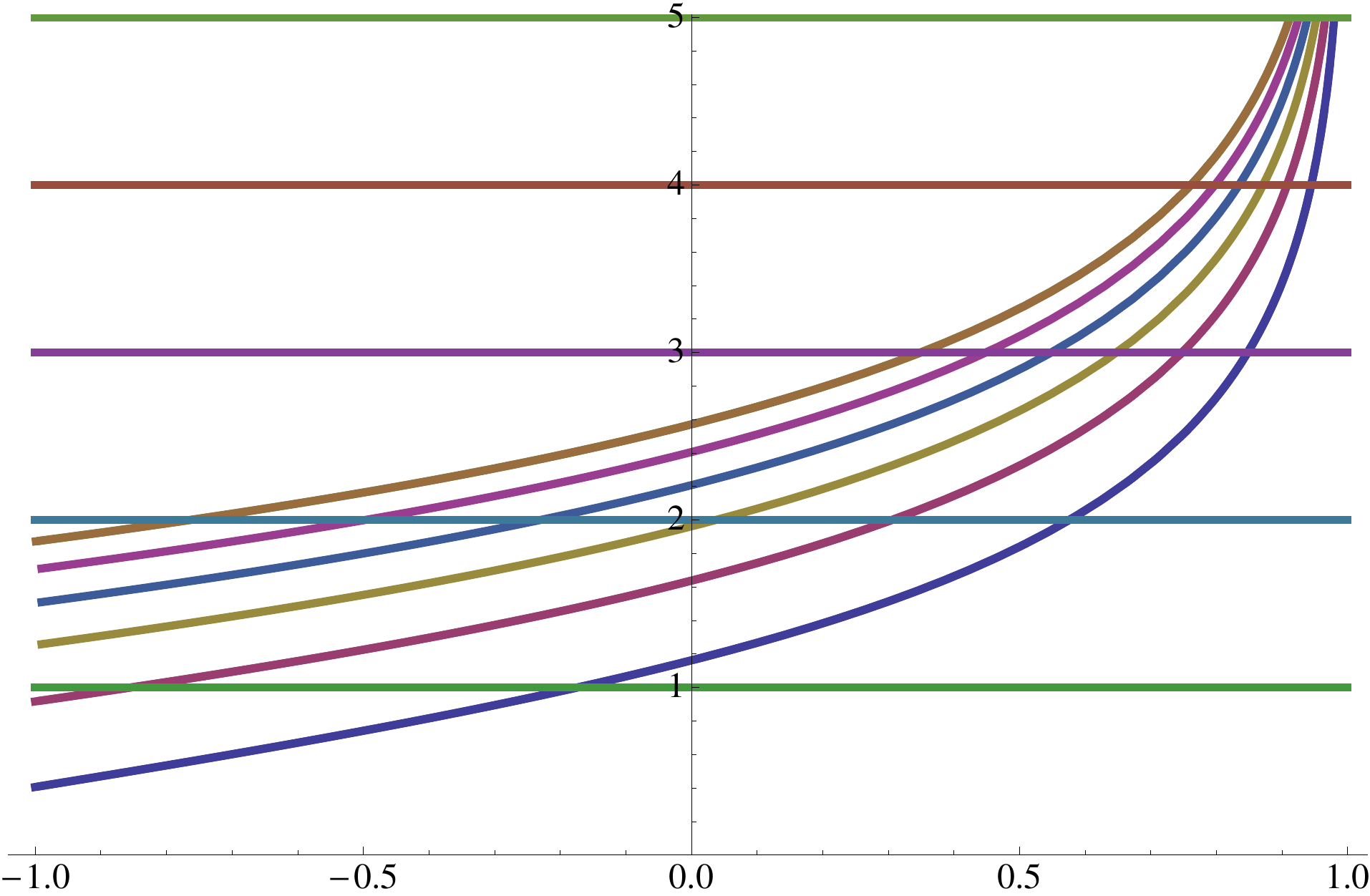}}
 \caption{{Number edge states  for different boundary conditions $\alpha=2,3,4,5$ on a 3D ball $B_3$ for  $mR_0=1$.}}
  \end{figure}

  The number of such states is always finite  and depends on the mass gap. For larger masses, there is a larger number of
  edge states.  For mass $m=\frac{1}{R_0}$, there are no edge states for
  boundary conditions with $\alpha<0.4$\,. The number of energy levels leading to edge states   for 
  different boundary conditions  $\alpha=1,2,3,4,5$  is  $3,11,18, 35, 98 $, respectively. 
  
  A final remark concerns the breaking of chiral symmetry induced by edge states. Since chiral boundary conditions 
break this symmetry, there is an induced asymmetry in the spectrum between positive and negative energy
modes. In this sense the particle-antiparticle symmetry is broken by the effect of chiral boundary conditions.

The above spinorial edge states do have a special meaning in  QCD in a three-dimensional ball with chiral boundary conditions. 
The MIT bag model uses the chiral bag boundary conditions in the limit $\alpha\to \infty$. In that case, there is an infinity of edge states.
In particular the lowest energy state is an edge state. The states of pions and protons  made of  quarks 
localised at the edges of the bag is a very interesting picture of the nucleon   where according to asymptotic freedom,  
quarks will move quite freely inside a hadron.

\section*{Acknowledgements}
M.~Asorey thanks A. Santagata for discussions. His work has been partially supported by the Spanish MICINN grants  FPA2012-35453 and CPAN Consolider Project CDS2007-42 and DGA-FSE (grant 2011-E24/2). 
A.P. Balachandran thanks the group at the Centre for High Energy Physics, Indian Institute of Science, Bangalore, and especially Sachin Vaidya for hospitality. He also thanks Andr\'es Reyes at the Universidad de los Andes for hosting him as  the Sanford Professor and for warm hospitality when this work was being completed.
J.M.~P\'erez-Pardo has been partially supported by the project MTM2014-54692, Ministry of Economy and Competitivity, Spain.

\end{document}